\documentclass[a4paper,11pt]{article}
\pdfoutput=1


\usepackage{amsfonts,amsmath,amssymb,amsthm,booktabs,color,doi,enumitem,graphicx,latexsym,marginnote,microtype,multirow,wrapfig}
\usepackage[a4paper,vmargin=28mm,hmargin=38mm]{geometry}
\usepackage[numbers,sort&compress]{natbib}
\usepackage[basic]{complexity}
\usepackage[font=small,margin=1em]{caption}
\usepackage{hyperref}

\newcommand{\Lo}{\mathcal{L}} 
\newcommand{\aA}{\mathcal{A}} 
\newcommand{\aB}{\mathcal{B}} 
\newcommand{\aC}{\mathcal{C}} 
\newcommand{\sA}{\mathbf{A}} 
\newcommand{\sB}{\mathbf{B}} 
\newcommand{\F}{\mathcal{F}} 
\newcommand{\Pp}{\mathcal{P}} 
\newcommand{\N}{\mathbb{N}}
\DeclareMathOperator{\set}{set}
\DeclareMathOperator{\multiset}{multiset}

\newclass{\Vector}{Vector}
\newclass{\Multiset}{Multiset}
\newclass{\Set}{Set}
\newclass{\Broadcast}{Broadcast}

\newcommand{\sVector}{\mathbf{Vector}}
\newcommand{\sMultiset}{\mathbf{Multiset}}
\newcommand{\sSet}{\mathbf{Set}}
\newcommand{\sBroadcast}{\mathbf{Broadcast}}

\renewcommand{\models}{\vDash}
\newcommand{\nmodels}{\nvDash}

\newcommand{\VVc}{{\ComplexityFont{VV}_{\ComplexityFont{c}}}}
\newclass{\VV}{VV}
\newclass{\MV}{MV}
\newclass{\SV}{SV}
\newclass{\VB}{VB}
\newclass{\MB}{MB}
\newclass{\SB}{SB}
\newcommand{\SBo}{{\ComplexityFont{SB}_{\ComplexityFont{o}}}}
\newclass{\LOCAL}{LOCAL}

\newcommand{\VVcl}{\VVc(1)}
\newcommand{\VVl}{\VV(1)}
\newcommand{\MVl}{\MV(1)}
\newcommand{\SVl}{\SV(1)}
\newcommand{\VBl}{\VB(1)}
\newcommand{\MBl}{\MB(1)}
\newcommand{\SBl}{\SB(1)}

\newcommand{\ML}{\mathrm{ML}} 
\newcommand{\GML}{\mathrm{GML}} 
\newcommand{\MML}{\mathrm{MML}} 
\newcommand{\GMML}{\mathrm{GMML}} 
\newcommand{\KM}{\mathit{K}} 
\newcommand{\CM}{\mathcal{K}} 
\newcommand{\md}{\mathrm{md}} 
\newcommand{\pow}{\mathcal{P}} 

\newcommand{\KVV}{\mathit{K}_{+,+}} 
\newcommand{\KMV}{\mathit{K}_{-,+}}
\newcommand{\KVB}{\mathit{K}_{+,-}}
\newcommand{\KMB}{\mathit{K}_{-,-}}
\newcommand{\CVVc}{\mathcal{K}^\mathsf{c}_{+,+}} 
\newcommand{\CVV}{\mathcal{K}_{+,+}} 
\newcommand{\CMV}{\mathcal{K}_{-,+}} 
\newcommand{\CVB}{\mathcal{K}_{+,-}} 
\newcommand{\CMB}{\mathcal{K}_{-,-}} 

\newtheorem{theorem}{Theorem}
\newtheorem{lemma}[theorem]{Lemma}
\newtheorem{fact}[theorem]{Fact}
\newtheorem{corollary}[theorem]{Corollary}

\theoremstyle{remark}
\newtheorem{remark}{Remark}

\setenumerate{label={\normalfont(\alph*)},leftmargin=7ex}
\setitemize{label=--}
\clubpenalty=10000
\widowpenalty=10000
\overfullrule=1mm

\hypersetup{
    colorlinks=true,
    linkcolor=black,
    citecolor=black,
    filecolor=black,
    urlcolor=[rgb]{0,0.1,0.5},
}

\newcounter{myexternalpagenum}
\newcommand{\definepage}[1]{\stepcounter{myexternalpagenum}\edef#1{\arabic{myexternalpagenum}}}
\definepage{\PPortNumbering}
\definepage{\PConsistent}
\definepage{\PInput}
\definepage{\POutput}
\definepage{\PBeforeAfter}
\definepage{\PSummary}
\definepage{\PRelations}
\definepage{\PVvVvc}
\definepage{\PSumDeg}
\definepage{\PRegularVv}

\hypersetup{
    pdftitle={Weak Models of Distributed Computing, with Connections to Modal Logic},
    pdfauthor={Lauri Hella, Matti J\"arvisalo, Antti Kuusisto, Juhana Laurinharju, Tuomo Lempi\"ainen, Kerkko Luosto, Jukka Suomela, Jonni Virtema},
}

\begin{document}

\vspace*{5ex}
\begin{center}
    {\Large \textbf{Weak Models of Distributed Computing, \\ with Connections to Modal Logic}\par}

    \bigskip
    \bigskip

    Lauri Hella$^1$,
    Matti J\"arvisalo$^2$,
    Antti Kuusisto$^3$,
    Juhana Laurinharju$^2$, \\
    Tuomo Lempi\"ainen$^2$,
    Kerkko Luosto$^1$,
    Jukka Suomela$^2$, and
    Jonni Virtema$^1$

    \bigskip
    $^1$School of Information Sciences, University of Tampere \\[1ex]
    $^2$Helsinki Institute for Information Technology HIIT, \\ Department of Computer Science, University of Helsinki \\[1ex]
    $^3$Institute of Computer Science, University of Wroc\l aw, Poland
\end{center}

\bigskip
\bigskip
\noindent\textbf{Abstract.}
This work presents a classification of weak models of distributed computing. We focus on deterministic distributed algorithms, and study models of computing that are weaker versions of the widely-studied port-numbering model. In the port-numbering model, a node of degree $d$ receives messages through $d$ input ports and sends messages through $d$ output ports, both numbered with $1,2,\dotsc,d$. In this work, $\VVc$ is the class of all \emph{graph problems} that can be solved in the standard port-numbering model. We study the following subclasses of $\VVc$:
\begin{itemize}[noitemsep,leftmargin=4em]
    \item[$\VV$:] Input port $i$ and output port $i$ are not necessarily connected to the same neighbour.
    \item[$\MV$:] Input ports are not numbered; algorithms receive a multiset of messages.
    \item[$\SV$:] Input ports are not numbered; algorithms receive a set of messages.
    \item[$\VB$:] Output ports are not numbered; algorithms send the same message to all output ports.
    \item[$\MB$:] Combination of $\MV$ and $\VB$.
    \item[$\SB$:] Combination of $\SV$ and $\VB$.
\end{itemize}
Now we have many trivial containment relations, such as $\SB \subseteq \MB \subseteq \VB \subseteq \VV \subseteq \VVc$, but it is not obvious if, for example, either of $\VB \subseteq \SV$ or $\SV \subseteq \VB$ should hold. Nevertheless, it turns out that we can identify a \emph{linear order} on these classes. We prove that $\SB \subsetneq \MB = \VB \subsetneq \SV = \MV = \VV \subsetneq \VVc$. The same holds for the constant-time versions of these classes.

We also show that the constant-time variants of these classes can be characterised by a corresponding \emph{modal logic}. Hence the linear order identified in this work has direct implications in the study of the expressibility of modal logic. Conversely, one can use tools from modal logic to study these classes.

\thispagestyle{empty}
\setcounter{page}{0}
\newpage

\section{Introduction}

We introduce seven complexity classes, $\VVc$, $\VV$, $\MV$, $\SV$, $\VB$, $\MB$, and $\SB$, each defined as the class of \emph{graph problems} that can be solved with a deterministic distributed algorithm in a certain variant of the widely-studied \emph{port-numbering model}. We present a \emph{complete characterisation} of the containment relations between these classes, as well as their constant-time counterparts, and identify connections between these classes and questions related to \emph{modal logic}.

\subsection{State Machines}\label{ssec:distalg}

For our purposes, a distributed algorithm is best understood as a state machine~$\aA$. In a distributed system, each node is a copy of the same state machine~$\aA$. Computation proceeds in synchronous steps. In each step, each machine
\begin{enumerate}[label=(\arabic*),noitemsep]
    \item sends messages to its neighbours,
    \item receives messages from its neighbours, and
    \item updates its state based on the messages that it received. 
\end{enumerate}
If the new state is a stopping state, the machine halts.

Let us now formalise the setting studied in this work. We use the notation $[k] = \{1,2,\dotsc,k\}$. For each positive integer $\Delta$, let $\F(\Delta)$ consist of all simple undirected graphs of maximum degree at most~$\Delta$. A \emph{distributed state machine} for $\F(\Delta)$ is a tuple $\aA = (Y, Z, z_0, M, m_0, \mu, \delta)$, where
\begin{itemize}[noitemsep]
    \item $Y$ is a finite set of stopping states,
    \item $Z$ is a (possibly infinite) set of intermediate states such that $Y\cap Z=\emptyset$,
    \item $z_0\colon \{0,1,\dotsc,\Delta\} \to Y\cup Z$ defines the initial state depending on the degree of the node,
    \item $M$ is a (possibly infinite) set of messages,
    \item $m_0 \in M$ is a special symbol for ``no message'',
    \item $\mu\colon Z \times [\Delta] \to M$ is a function that constructs the outgoing messages,
    \item $\delta\colon Z \times M^\Delta \to Y \cup Z$ defines the state transitions.
\end{itemize}
To simplify the notation, we extend the domains of $\mu$ and $\delta$ to cover the stopping states: for all $y \in Y$, we define $\mu(y,i) = m_0$ for any $i \in [\Delta]$, and $\delta(y,\vec{m}) = y$ for any $\vec{m} \in M^\Delta$. In other words, a node that has stopped does not send any messages and does not change its state any more.

\subsection{Port Numbering}

Now consider a graph $G = (V,E) \in \F(\Delta)$. We write $\deg(v)$ for the degree of node $v \in V$. A \emph{port} of $G$ is a pair $(v,i)$ where $v \in V$ and $i \in [\deg(v)]$. Let $P(G)$ be the set of all ports of $G$. Let $p\colon P(G) \to P(G)$ be a bijection. Define
\begin{align*}
    A(p) &= \{ (u,v) : \text{$u \in V$, $v \in V$, and $p((u,i)) = (v,j)$ for some $i$ and $j$} \}, \\
    A(G) &= \{ (u,v) : \{u,v\} \in E \}.
\end{align*}
We say that $p$ is a \emph{port numbering} of $G$ if $A(p) = A(G)$; see Figure~\ref{fig:port-numbering} for an example. The intuition here is that a node $v \in V$ has $\deg(v)$ communication ports; if it sends a message to its port $(v,i)$, and $p((v,i)) = (u,j)$, the message will be received by its neighbour $u$ from port $(u,j)$.

\begin{figure}[t]
    \centering
    \includegraphics[page=\PPortNumbering]{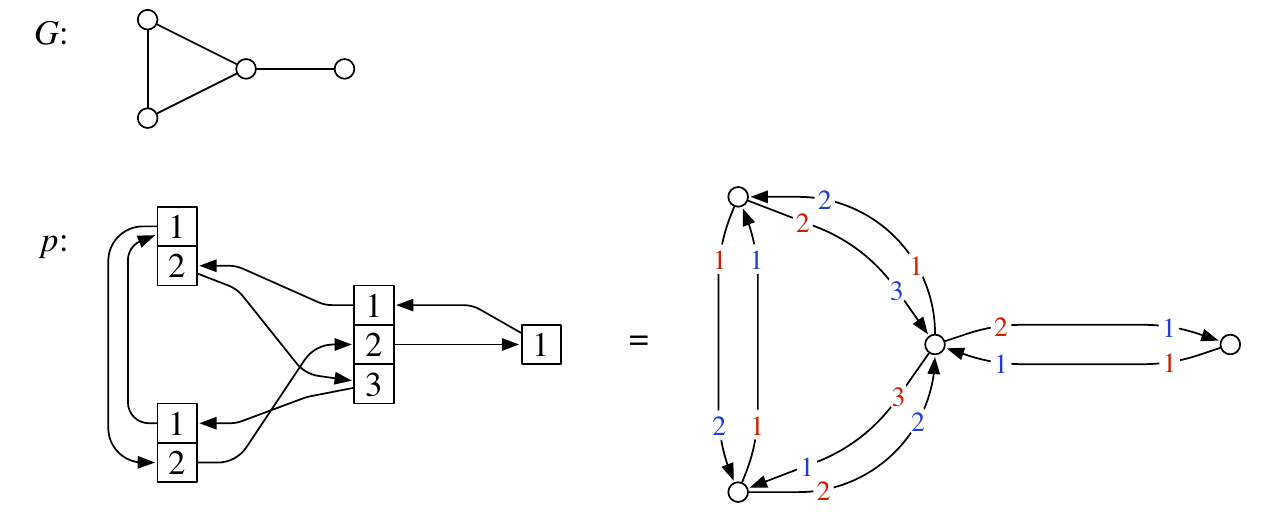}
    \caption{A port numbering $p$ of graph $G$. Here we present $p$ using two different notations; in the illustration on the left, the ports are explicitly shown, while in the illustration on the right, the ports are given as the labels of the edges.}\label{fig:port-numbering}
\end{figure}

We say that a port numbering is \emph{consistent} if $p$ is an involution, that is,
\[
    p\bigl(p((v,i))\bigr) = (v,i) \text{ for all } (v,i) \in P(G).
\]
See Figure~\ref{fig:consistent} for an example.

\begin{figure}[t]
    \centering
    \includegraphics[page=\PConsistent]{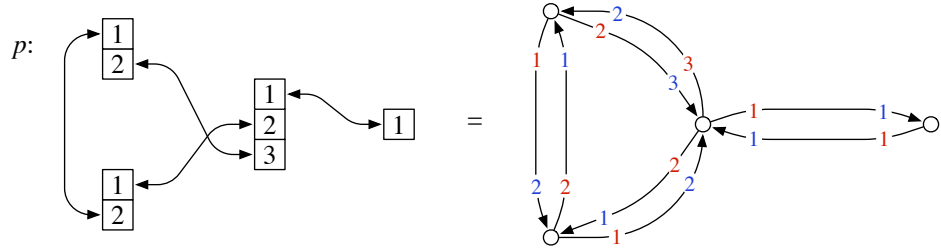}
    \caption{A consistent port numbering.}\label{fig:consistent}
\end{figure}

\subsection{Execution of a State Machine}\label{execution}

For a fixed distributed state machine $\aA$, a graph $G$, and a port numbering $p$, we can define the \emph{execution} of $\aA$ in $(G,p)$ recursively as follows.

The state of the system at time $t = 0, 1, \dotsc$ is represented as a state vector $x_t\colon V \to Y \cup Z$. At time $0$, we have
\[
    x_0(u) = z_0(\deg(u))
\]
for each $u \in V$.

Now assume that we have defined the state $x_t$ at time $t$. Let $(u,i) \in P(G)$ and $(v,j) = p^{-1}((u,i))$. Define
\[
    a_{t+1}(u,i) = \mu(x_t(v),j).
\]
In words, $a_{t+1}(u,i)$ is the message received by node $u$ from port $(u,i)$ in round $t+1$, or equivalently the message sent by node $v$ to port $(v,j)$. For each $u \in V$ we define a vector $\vec{a}_{t+1}(u)$ of length $\Delta$ as follows:
\[
    \vec{a}_{t+1}(u) = \bigl( a_{t+1}(u,1),\, a_{t+1}(u,2),\, \dotsc,\, a_{t+1}(u,\deg(u)),\, m_0, m_0, \dotsc, m_0 \bigr).
\]
In other words, we simply take all messages received by $u$, in the order of increasing port number; the padding with the dummy messages $m_0$ is just for notational convenience so that $\vec{a}_{t+1}(u) \in M^\Delta$. Finally, we define the new state of a node $u \in V$ as follows:
\[
    x_{t+1}(u) = \delta(x_t(u), \vec{a}_{t+1}(u)).
\]

We say that $\aA$ \emph{stops in time $T$} in $(G,p)$ if $x_T(u) \in Y$ for all $u \in V$. If $\aA$ stops in time $T$ in $(G,p)$, we say that $S = x_T$ is the \emph{output} of $\aA$ in $(G,p)$. Here $S(u) = x_T(u)$ is the \emph{local output} of $u \in V$.

\subsection{Graph Problems}

A \emph{graph problem} is a function $\Pi$ that associates with each undirected graph $G = (V,E)$ a set $\Pi(G)$ of \emph{solutions}. Each solution $S \in \Pi(G)$ is a mapping $S\colon V \to Y$; here $Y$ is a finite set that does not depend on $G$.

We emphasise that this definition is by no means universal; however, it is convenient for our purposes and covers a wide range of classical graph problems:
\begin{itemize}
    \item Finding a \emph{subset of vertices}. A typical example is the task of finding a \emph{maximal independent set}: $Y = \{0,1\}$, and each solution $S$ is the indicator function of a maximal independent set.
    \item Finding a \emph{partition of vertices}. A typical example is the task of finding a \emph{vertex $3$-colouring}: $Y = \{1,2,3\}$, and each solution $S$ is a valid $3$-colouring of the graph.
    \item Deciding \emph{graph properties}. A typical example is deciding if a graph is \emph{Eulerian}: Here $Y = \{0,1\}$. If $G$ is Eulerian, there is only one solution $S$ with $S(v) = 1$ for all $v \in V$. If $G$ is not Eulerian, valid solutions are mappings $S$ such that $S(v) = 0$ for at least one $v \in V$. Put otherwise, all nodes must accept a yes-instance, and at least one node must reject a no-instance.
\end{itemize}
The idea is that a distributed state machine $\aA$ \emph{solves} a graph problem $\Pi$ if, for any graph $G$ and for any port numbering of $G$, the output of $\aA$ is a valid solution $S \in \Pi(G)$. However, the fact that we study graphs of bounded degree requires some care; hence the following somewhat technical definition.

Let $\Pi$ be a graph problem. Let $T\colon \N \times \N \to \N$. Let $\sA = (\aA_1, \aA_2, \dotsc)$ be a sequence of distributed state machines. We say that \emph{$\sA$ solves $\Pi$ in time $T$} if the following hold for any $\Delta \in \N$, any graph $G \in \F(\Delta)$, and any port numbering $p$ of $G$:
\begin{enumerate}[noitemsep]
    \item State machine $\aA_\Delta$ stops in time $T(\Delta,|V|)$ in $(G,p)$.
    \item The output of $\aA_\Delta$ is in $\Pi(G)$.
\end{enumerate}

We say that \emph{$\sA$ solves $\Pi$ in time $T$ assuming consistency} if the above holds for any consistent port numbering $p$ of $G$. Note that we do not require that $\aA_\Delta$ stops if the port numbering happens to be inconsistent.

We say that \emph{$\sA$ solves $\Pi$} or \emph{$\sA$ is an algorithm for $\Pi$} if there is any function $T$ such that $\sA$ solves~$\Pi$ in time $T$. We say that \emph{$\sA$ solves $\Pi$ in constant time} or \emph{$\sA$ is a local algorithm for $\Pi$} if $T(\Delta,n) = T'(\Delta)$ for some $T'\colon \N \to \N$, independently of $n$.

\begin{remark}
We emphasise that the term ``constant time'' refers to the case of a fixed $\Delta$. We only require that for each given $\Delta$ the running time of state machine $\aA_\Delta$ on graph family $\F(\Delta)$ is bounded by a constant. That is, ``local algorithms'' are $O(1)$-time algorithms on any graph family of maximum degree $O(1)$.
\end{remark}

\subsection{Algorithm Classes}

Now we are ready to introduce the concepts studied in this work: variants of the definition of a distributed algorithm.

For a vector $\vec{a} = (a_1, a_2, \dotsc, a_\Delta) \in M^\Delta$ we define
\begin{align*}
    \set(\vec{a}) &= \{ a_1, a_2, \dotsc, a_\Delta \}, \\
    \multiset(\vec{a}) &= {\bigr\{ (m,n) : m \in M,\ n = | \{ i \in [\Delta] : m = a_i \} | \bigl\}}.
\end{align*}
In other words, $\multiset(\vec{a})$ discards the ordering of the elements of $\vec{a}$, and $\set(\vec{a})$ furthermore discards the multiplicities.

Let $\Vector$ be the set of all distributed state machines $\aA$, as defined in Section~\ref{ssec:distalg}. We define three subclasses of distributed state machines, $\Set \subseteq \Multiset \subseteq \Vector$, and $\Broadcast \subseteq \Vector$:
\begin{itemize}
    \item $\aA \in \Multiset$ if $\multiset(\vec{a}) = \multiset(\vec{b})$ implies $\delta(x,\vec{a}) = \delta(x,\vec{b})$ for all $x \in Z$.
    \item $\aA \in \Set$ if $\set(\vec{a}) = \set(\vec{b})$ implies $\delta(x,\vec{a}) = \delta(x,\vec{b})$ for all $x \in Z$.
    \item $\aA \in \Broadcast$ if $\mu(x,i) = \mu(x,j)$ for all $x \in Z$ and $i, j \in [\Delta]$.
\end{itemize}

Classes $\Multiset$ and $\Set$ are related to \emph{incoming} messages; see Figure~\ref{fig:input} for an example. Intuitively, a state machine in class $\Vector$ considers a \emph{vector} of incoming messages, while a state machine in $\Multiset$ considers a \emph{multiset} of incoming messages, and a state machine in $\Set$ considers a \emph{set} of incoming messages. In particular, state machines in $\Multiset$ and $\Set$ do not have any access to the numbering of incoming ports.

\begin{figure}
    \centering
    \includegraphics[page=\PInput]{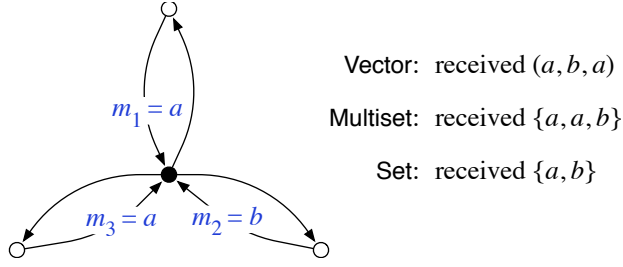}
    \caption{Comparison of $\Vector$, $\Multiset$, and $\Set$.}\label{fig:input}
\end{figure}

Class $\Broadcast$ is related to \emph{outgoing} messages; see Figure~\ref{fig:output} for an example. Intuitively, a state machine in class $\Vector$ constructs a \emph{vector} of outgoing messages, while a state machine in $\Broadcast$ can only \emph{broadcast} the same message to all neighbours. In particular, state machines in $\Broadcast$ do not have any access to the numbering of outgoing ports.

\begin{figure}
    \centering
    \includegraphics[page=\POutput]{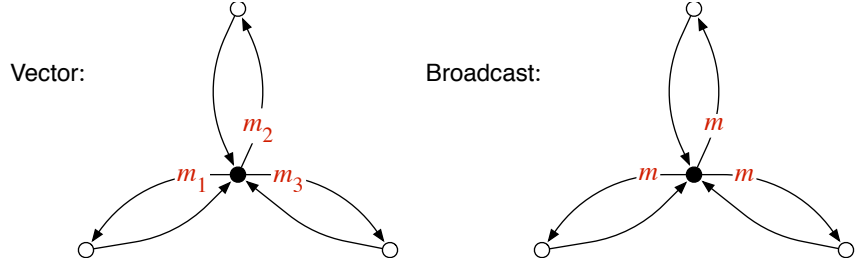}
    \caption{Comparison of $\Vector$ and $\Broadcast$.}\label{fig:output}
\end{figure}

We extend the definitions to sequences of state machines in a natural way:
\begin{align*}
    \sVector &= \bigl\{ (\aA_1, \aA_2, \dotsc) \,:\, \aA_\Delta \in \Vector \text{ for all $\Delta$} \bigr\}, \\
    \sMultiset &= \bigl\{ (\aA_1, \aA_2, \dotsc) \,:\, \aA_\Delta \in \Multiset \text{ for all $\Delta$} \bigr\}, \\
    \sSet &= \bigl\{ (\aA_1, \aA_2, \dotsc) \,:\, \aA_\Delta \in \Set \text{ for all $\Delta$} \bigr\}, \\
    \sBroadcast &= \bigl\{ (\aA_1, \aA_2, \dotsc) \,:\, \aA_\Delta \in \Broadcast \text{ for all $\Delta$} \bigr\}.
\end{align*}

From now on, we will use the word \emph{algorithm} to refer to both distributed state machines $\aA \in \Vector$ and to sequences of distributed state machines $\sA \in \sVector$, when there is no risk of confusion.

\subsection{Problem Classes}

So far we have defined classes of algorithms; now we will define seven classes of problems:
\begin{enumerate}
    \item $\Pi \in \VVc$ if there is an algorithm $\sA \in \sVector$ that solves problem~$\Pi$ assuming consistency,
    \item $\Pi \in \VV$ if there is an algorithm $\sA \in \sVector$ that solves problem~$\Pi$,
    \item $\Pi \in \MV$ if there is an algorithm $\sA \in \sMultiset$ that solves problem~$\Pi$,
    \item $\Pi \in \SV$ if there is an algorithm $\sA \in \sSet$ that solves problem~$\Pi$,
    \item $\Pi \in \VB$ if there is an algorithm $\sA \in \sBroadcast$ that solves problem~$\Pi$,
    \item $\Pi \in \MB$ if there is an algorithm $\sA \in \sMultiset \cap \sBroadcast$ that solves problem~$\Pi$,
    \item $\Pi \in \SB$ if there is an algorithm $\sA \in \sSet \cap \sBroadcast$ that solves problem~$\Pi$.
\end{enumerate}
We will also define the constant-time variants of the classes:
\begin{enumerate}
    \item $\Pi \in \VVcl$ if there is a local algorithm $\sA \in \sVector$ that solves problem~$\Pi$ assuming consistency,
    \item $\Pi \in \VVl$ if there is a local algorithm $\sA \in \sVector$ that solves problem~$\Pi$,
    \\[0.5ex]\ldots
\end{enumerate}
Note that consistency is irrelevant for all other classes; we only define the consistent variants of $\VV$ and $\VVl$. The classes are summarised in Figure~\ref{fig:beforeafter}a. Figure~\ref{fig:summary} summarises what information is available to an algorithm in each class.

\begin{remark}\label{rem:sbo}
In each problem class, we consider algorithms in which each node knows its own degree. While this is natural in all other cases, it may seem odd in the case of class $\SB$. In principle, we could define yet another class of problems $\SBo$, defined in terms of \emph{degree-oblivious} algorithms in $\sSet \cap \sBroadcast$, i.e., algorithms with a constant initialisation function $z_0$. However, it is easy to see that $\SBo$ is entirely trivial---in essence, one can only solve the problem of distinguishing non-isolated nodes from isolated nodes---while there are many non-trivial problems that we can solve in class $\SB$. In particular, it is trivial to prove that $\SBo \subsetneq \SB$. Hence we will not consider class $\SBo$ in this work. However, class $\SBo$ is more interesting if one considers labelled graphs; see Section~\ref{ssec:localinput}.
\end{remark}

\begin{figure}
    \centering
    \includegraphics[page=\PBeforeAfter]{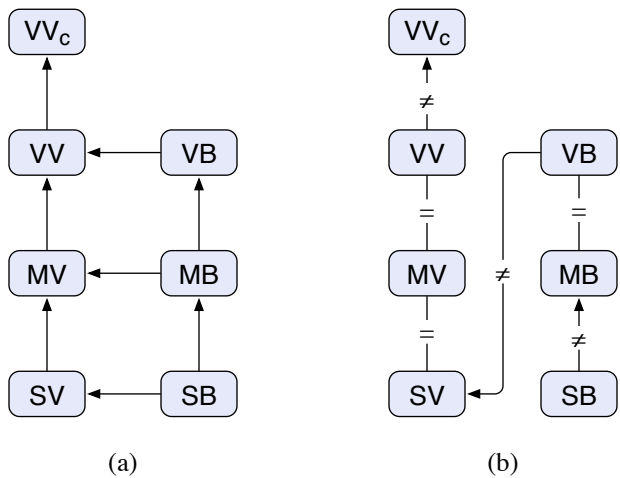}
    \caption{Classes of graph problems. (a)~Trivial subset relations between the classes. (b)~The linear order identified in this work.}\label{fig:beforeafter}
\end{figure}

\begin{figure}
    \centering
    \includegraphics[page=\PSummary]{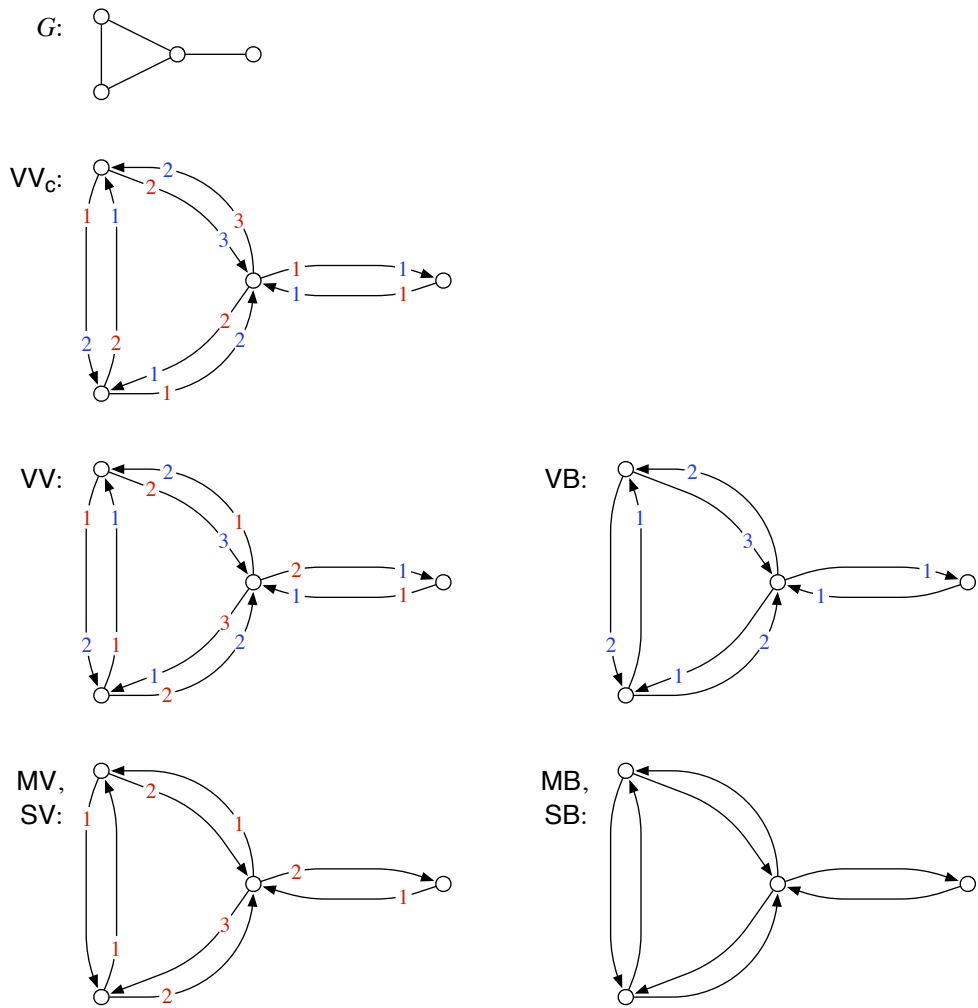}
    \caption{Auxiliary information available to a distributed algorithm in each class.}\label{fig:summary}
\end{figure}

\section{Contributions}\label{sec:contrib}

This work is a systematic study of the complexity classes $\VVc$, $\VV$, $\MV$, $\SV$, $\VB$, $\MB$, and $\SB$, as well as their constant-time counterparts. Our main contributions are two-fold.

First, we present a complete characterisation of the containment relations between these classes. The definitions of the classes imply the partial order depicted in Figure~\ref{fig:beforeafter}a. For example, classes $\VB$ and $\SV$ are seemingly orthogonal, and it would be natural to assume that neither $\VB \subseteq \SV$ nor $\SV \subseteq \VB$ holds. However, we show that this is not the case. Unexpectedly, the classes form a linear order (see Figure~\ref{fig:beforeafter}b):
\begin{align}
    \SB \subsetneq \MB = \VB \subsetneq \SV = \MV = \VV \subsetneq \VVc. \label{eq:order}
\end{align}
In summary, instead of seven classes that are possibly distinct, we have precisely four distinct classes. These four distinct classes of problems can be concisely characterised as follows, from the strongest to the weakest:
\begin{enumerate}[noitemsep,label=(\arabic*)]
    \item consistent port numbering (class $\VVc$),
    \item no incoming port numbers (class $\SV$ and equivalent),
    \item no outgoing port numbers (class $\VB$ and equivalent),
    \item neither (class $\SB$).
\end{enumerate}
We also show an analogous result for the constant-time versions:
\begin{align}
    \SBl \subsetneq \MBl = \VBl \subsetneq \SVl = \MVl = \VVl \subsetneq \VVcl. \label{eq:orderl}
\end{align}
The main technical achievement here is proving that $\SVl = \MVl$ and $\SV = \MV$. This together with the ideas of a prior work~\cite{astrand10vc-sc} leads to the linear orders \eqref{eq:order} and \eqref{eq:orderl}.

As our second contribution, we identify a novel connection between distributed computational complexity and modal logic. In particular, classes $\VVcl$, $\VVl$, $\MVl$, $\SVl$, $\VBl$, $\MBl$, and $\SBl$ have natural characterisations using certain variants of modal logic. This correspondence allows one to apply tools from the field of modal logic---in particular, bisimulation---to facilitate the proofs of \eqref{eq:order} and~\eqref{eq:orderl}. Conversely, we can lift our results from the field of distributed algorithms to modal logic, by re-interpreting the relations identified in~\eqref{eq:orderl}.

Some of the equivalences between the classes are already known by prior work---in particular, results that are similar to $\MB = \VB$ and $\MV = \VV \subsetneq \VVc$ are implied by e.g., Boldi et al.~\cite{boldi96symmetry} and Yamashita and Kameda~\cite{yamashita99leader}. The main differences between our work and prior work can be summarised as follows.
\begin{enumerate}
    \item All results related to classes $\SV$ and $\SB$ are new. In particular, we are not aware of any prior work that has studied class $\SV$ in this context.
    \item We approach the classification from the perspective of \emph{locality}. We not only prove the equivalences $\MB = \VB$ and $\SV = \MV = \VV$ but also show that in each case the simulation of the stronger model is \emph{efficient}. The nodes do not need to know any global information on the graph in advance (such as an upper bound on the size of the graph), and the nodes do not need to gather any information beyond their constant-radius neighbourhood. Our proofs yield the identical collapses for the constant-time versions of the classes: $\MBl = \VBl$ and $\SVl = \MVl = \VVl$. Similarly, our separation results only rely on problems that can be solved in constant time in one of the classes, without any global information.
    \item The focus on locality also enables us to introduce the connection with \emph{modal logic}. We show how to derive all separations between the complexity classes with bisimulation arguments.
\end{enumerate}
We will discuss related work in more detail in Section~\ref{sec:rel}; see also Tables \ref{tab:prior} and~\ref{tab:prior2}.

\section{Motivation and Related Work}\label{sec:rel}

In this work, we study \emph{deterministic} distributed algorithms in \emph{anonymous} networks---all state transitions are deterministic, all nodes run the same algorithm, and initially each node knows only its own degree. This is a fairly weak model of computation, and traditionally research has focused on stronger models of distributed computing.

\subsection{Stronger Models}

There are two obvious extensions:
\begin{enumerate}
    \item \emph{Networks with unique identifiers}: Initially, all nodes are labelled with $O(\log n)$-bit, globally unique identifiers. With this extension, we arrive at Linial's~\cite{linial92locality} model of computation; Peleg~\cite{peleg00distributed} calls it the $\LOCAL$ model.
    \item \emph{Randomised distributed algorithms}: The nodes have access to a stream of random bits. The state transitions can depend on the random bits.
\end{enumerate}
Both of these extensions lead to a model that is strictly stronger than any of the models studied in this work. The problem of finding a maximal independent set is a good example of a graph problem that separates the weak models from the above extensions. The problem is clearly not in $\VVc$---a cycle with a symmetric port numbering is a simple counterexample---while it is possible to find a maximal independent set \emph{fast} in both of the above models.

\subsection{Port-Numbering Model (\texorpdfstring{$\VVc$}{VVc})}

While most of the attention is on stronger models, one of the weaker models has been studied extensively since the 1980s. Unsurprisingly, it is the strongest of the family, model $\VVc$, and it is commonly known as the \emph{port-numbering model} in the literature.

The study of the port-numbering model was initiated by Angluin~\cite{angluin80local} in 1980. Initially the main focus was on problems that have a \emph{global} nature---problems in which the local output of a node necessarily depends on the global properties of the input. Examples of papers from the first two decades after Angluin's pioneering work include Attiya et al.~\cite{attiya88computing}, Yamashita and Kameda \cite{yamashita96computing, yamashita96computinga, yamashita96computing-functions}, and Boldi and Vigna~\cite{boldi01effective}, who studied global functions, leader election problems, spanning trees, and topological properties.

Based on the earlier work, the study of the port-numbering model may look like a dead end: positive results were rare. However, very recently, distributed algorithms in the port-numbering model have become an increasingly important research topic---and surprisingly, the study of the port-numbering model is now partially motivated by the desire to understand distributed computing in \emph{stronger} models of computation.

The background is in the study of \emph{local algorithms}, i.e., constant-time distributed algorithms~\cite{suomela13survey}. The research direction was initiated by Naor and Stockmeyer~\cite{naor95what} in 1995, and initially it looked like another area where most of the results are negative---after all, it is difficult to imagine a non-trivial graph problem that could be solved in constant time. However, since 2005, we have seen a large number of local algorithms for a wide range of graph problems: these include algorithms for
vertex covers \cite{astrand09vc2apx, astrand10vc-sc, wiese08impact, moscibroda06locality, polishchuk09simple, kuhn05price, kuhn06price},
matchings \cite{floreen10almost-stable, astrand10weakly-coloured},
dominating sets \cite{czygrinow08fast, lenzen10mds, lenzen10what, lenzen11phd},
edge dominating sets~\cite{suomela10eds},
set covers \cite{kuhn05price,kuhn06price,astrand10vc-sc},
semi-matchings~\cite{czygrinow11semimatching},
stable matchings~\cite{floreen10almost-stable}, and
linear programming \cite{kuhn05price, kuhn06price, floreen08local, floreen08tight, floreen09max-min-lp, floreen08approximating, floreen11max-min-lp}.
Naturally, most of these algorithms are related to approximations and special cases, but nevertheless the sheer number of such algorithms is a good demonstration of the unexpected capabilities of local algorithms.

At first sight, constant-time algorithms in stronger models and distributed algorithms in the port-numbering model seem to be orthogonal concepts. However, in many cases a local algorithm is also an algorithm in the port-numbering model. Indeed, a formal connection between local algorithms and the port-numbering model has been recently identified~\cite{goos12local-approximation}.

\subsection{Weaker Models}

As the study of the port-numbering model has been recently revived, now is the right time to ask if it is justified to use $\VVc$ as the standard model in the study of anonymous networks. First, the definition is somewhat arbitrary---it is not obvious that $\VVc$ is the ``right'' class, instead of $\VV$, for example. Second, while the existence of a port numbering is easily justified in the context of wired networks, weaker models such as $\Broadcast$ and $\Set$ seem to make more sense from the perspective of wireless networks.

If we had no positive examples of problems in classes below $\VVc$, there would be little motivation for pursuing further. However, the recent work related to the vertex cover problem~\cite{astrand10vc-sc} calls for further investigation. It turned out that $2$-approximation of vertex cover is a graph problem that is not only in $\VVcl$, but also in $\MBl$---that is, we have a non-trivial graph problem that does not require any access to either outgoing or incoming port numbers. One ingredient of the vertex cover algorithm is the observation that $\MBl = \VBl$, which raises the question of the existence of other similar collapses in the hierarchy of weak models.

We are by no means the first to investigate the weak models. Computation in models that are strictly weaker than the standard port-numbering model has been studied since the 1990s, under various terms---see Table~\ref{tab:prior} for a summary of terminology, and Table~\ref{tab:prior2} for an overview of the main differences in the research directions. Questions related to specific problems, models, and graph families have been studied previously, and indeed many of the techniques and ideas that we use are now standard---this includes the use of symmetry and isomorphisms, local views, covering graphs (lifts) and universal covering graphs, and factors and factorisations. Mayer, Naor, and Stockmeyer \cite{naor95what, mayer95local} made it explicit that the parity of node degrees makes a huge difference in the port-numbering model, and Yamashita and Kameda~\cite{yamashita96computing} discussed factors and factorisations in this context; the underlying graph-theoretic observations can be traced back to as far as Petersen's 1891 work~\cite{petersen1891dietheorie}. Some equivalences and separations between the classes are already known, or at least implicit in prior work---see, in particular, Boldi et al.~\cite{boldi96symmetry} and Yamashita and Kameda~\cite{yamashita99leader}.

\begin{table}
    \newcommand{\mylsep}{\addlinespace[2ex]}
    \centering
    \begin{tabular*}{\columnwidth}{@{\extracolsep{\fill}}llll@{}}
        \toprule
        Algorithm & Problem & Term & References \\
        class     & class   &      &            \\
        \midrule \addlinespace
        $\Vector$ & $\VVc$
            &  port numbering & \cite{angluin80local} \\
            && local edge labelling & \cite{yamashita96computing} \\
            && local orientation & \cite{chalopin06groupings, flocchini03computing} \\
            && orientation & \cite{moran93gap} \\
            && complete port awareness & \cite{boldi96symmetry} \\
            && monoid graph & \cite{norris94classifying-anonymous} \\
            && port-to-port & \cite{yamashita89electing,yamashita99leader} \\
            && port-\`a-port & \cite{chalopin06phd} \\
        \mylsep
        $\Vector$ & $\VV$
            &  input/output port awareness & \cite{boldi96symmetry} \\
        \mylsep
        $\Multiset$ & $\MV$
            &  output port awareness & \cite{boldi96symmetry} \\
            && wireless in input & \cite{boldi97computing} \\
            && mailbox & \cite{boldi97computing} \\
            && port-to-mailbox & \cite{yamashita89electing,yamashita99leader} \\
            && port-\`a-bo\^\i te & \cite{chalopin06phd} \\
        \mylsep
        $\Set$ & $\SV$
            &  --- \\
        \mylsep
        $\Broadcast$ & $\VB$
            &  input port awareness & \cite{boldi96symmetry} \\
            && wireless in output & \cite{boldi97computing} \\
            && broadcast & \cite{yamashita89electing,boldi97computing} \\
            && broadcast-to-port & \cite{yamashita99leader} \\
            && diffusion-\`a-port & \cite{chalopin06phd} \\
        \mylsep
        $\Multiset \cap \Broadcast$ & $\MB$
            &  totalistic & \cite{wolfram83statistical} \\
            && wireless & \cite{diks95anonymous,norris95computing,boldi97computing} \\
            && broadcast-to-mailbox & \cite{yamashita99leader} \\
            && diffusion-\`a-bo\^\i te & \cite{chalopin06phd} \\
            && mailbox-to-mailbox & \cite{yamashita89electing} \\
            && network without colours & \cite{boldi01effective} \\
            && broadcast & \cite{astrand10vc-sc} \\
            && (no name) & \cite{kuhn06complexity} \\
        \mylsep
        $\Set \cap \Broadcast$ & $\SB$
            &  beeping & \cite{cornejo10deploying,afek11beeping} \\
        \addlinespace \bottomrule \addlinespace
    \end{tabular*}
    \caption{Prior work related to the weak models, and a summary of the related terminology. We have identified the closest equivalent in our classification, not necessarily an exact match---see also Table~\ref{tab:prior2} for an overview of the main differences.}\label{tab:prior}
\end{table}

\begin{table}
    \newcommand{\mylsep}{\addlinespace[2ex]}
    \centering
    \begin{tabular*}{\columnwidth}{@{\extracolsep{\fill}}lp{9cm}@{}}
        \toprule
        References & Difference \\
        \midrule \addlinespace
        \cite{norris95computing, boldi96symmetry, boldi01effective, chalopin06groupings, yamashita99leader, yamashita96computing} &
            Focuses on the case of a known topology $G = (V,E)$, a known $|V|$, or a known upper bound on $|V|$.
        \\ \mylsep
        \cite{boldi96symmetry, yamashita99leader} &
            Proves equivalences between the models from a \emph{global} perspective; the simulation overhead can be linear in $|V|$. Our work shows that the equivalences hold also from a \emph{local} perspective; the simulation overhead is bounded by a constant.
        \\ \mylsep
        \cite{norris94classifying-anonymous, norris95computing, boldi97computing} &
            Studies functions that map the local inputs of the nodes to specific local outputs of the nodes. Our work studies graph problems---the local outputs depend on the structure of $G$, not on the local inputs.
        \\ \mylsep
        \cite{norris94classifying-anonymous, norris95computing, yamashita96computinga, boldi01effective} &
            Considers the problem of deciding whether a given problem can be solved in a given graph. In our work, we are interested in the existence of a problem and a graph that separates two models.
        \\ \mylsep
        \cite{yamashita89electing, boldi96symmetry, yamashita96computinga, yamashita99leader, kuhn06complexity, astrand10vc-sc, afek11beeping} &
            Studies individual problems, not classes of problems.
        \\ \mylsep
        \cite{boldi99computing, boldi01effective} &
            Provides general results, but does not study the implications from the perspective of the weak models and their relative strength.
        \\ \mylsep
        \cite{angluin80local, attiya88computing, mayer95local, yamashita96computing, yamashita96computinga, chalopin06groupings} &
            Does not consider models that are weaker than the port-numbering model.
        \\ \mylsep
        \cite{wolfram83statistical, attiya88computing, diks95anonymous, kuhn06complexity, flocchini03computing, moran93gap} &
            Assumes a specific network structure (cycle, grid, etc.), or auxiliary information in local inputs.
        \\ \mylsep
        \cite{afek11beeping, emek12stone} &
            Studies randomised, asynchronous algorithms.
        \\
        \addlinespace \bottomrule \addlinespace
    \end{tabular*}
    \caption{Main differences in the problem setting between this work and selected prior work.}\label{tab:prior2}
\end{table}

However, it seems that a comprehensive classification of the weak models from the perspective of solvable graph problems has been lacking. Our main contribution is putting all pieces together in order to provide a complete characterisation of the relations between the weak models and the complexity classes associated with them.

We also advocate a new perspective for studying the weak models---the connections with modal logic can be used to complement the traditional graph-theoretic approaches. In particular, bisimulation is a convenient tool that complements the closely related graph-theoretic concepts of covering graphs and fibrations.

\subsection{Local Inputs}\label{ssec:localinput}

In this article we study graph problems associated with simple undirected graphs of the type $(V,E)$. It would also be worthwhile to study structures of the type $(V,E,f)$, where $f\colon V\rightarrow\mathbb{N}$ is function encoding a \emph{local input} $f(u)$ associated with each node $u\in V$. The related notion of a state machine would be the same as in Section \ref{execution}, with the additional property that the initial state $x_0(u)$ of a machine at a node $u$ would depend on the local information $f(u)$ in addition to the degree of $u$.

While we will not study the effects of local inputs, it is worth noticing that the classification given by \eqref{eq:order} and \eqref{eq:orderl} extends immediately to the context with local information---in particular, a separation with unlabelled graphs implies a separation in the more general case of labelled graphs.

As long as each node knows its own degree, local inputs do not seem to add anything interesting to the classification of weak models of distributed computing---a uniformly finite amount of local information could be encoded in the topological information of the graph. However, if we studied models that are strictly weaker than $\SB$ (for example, model $\SBo$ that we briefly mentioned in Remark~\ref{rem:sbo}), local inputs would be necessary in order to arrive at non-trivial results.

\subsection{Distributed Algorithms and Modal Logic}

Modal logic (see Section~\ref{sec:logic}) has, of course, been applied previously in the context of distributed systems. For example, in their seminal paper, Halpern and Moses~\cite{halpern90knowledge} use modal logic to model epistemic phenomena in distributed systems. A distributed system $S$ gives rise to a \emph{Kripke model} (see Section~\ref{ssec:logics}), whose set $W$ of domain points corresponds to the set of partial runs of $S$, that is, finite sequences of global states of $S$. For each processor $i$ of $S$, there is an \emph{accessibility relation} $R_i$ such that $(v,w)\in R_i$ if and only if $v$ and $w$ are \emph{indistinguishable} from the point of view of processor $i$. This framework suits well for epistemic considerations.

In traditional modal approaches, the domain elements of a Kripke model correspond to possible states of a distributed computation process. Our perspective is a radical departure from this approach. In our framework, a distributed system is---essentially---a Kripke model, where the domain points are processors and the accessibility relations are communication channels. While such an interpretation is of course always possible, it turns out to be particularly helpful in the study of weak models of distributed computing. With this interpretation, for example, a local algorithm in $\Set \cap \Broadcast$ corresponds to a formula of modal logic, while a local algorithm in $\Multiset \cap \Broadcast$ corresponds to a formula of \emph{graded} modal logic---local algorithms are exactly as expressive as such formulas, and the running time of an algorithm equals the modal depth of a formula. Standard techniques from the field of modal logic can be directly applied in the study of distributed algorithms, and conversely our classification of the weak models of distributed computing can be rephrased as a result that characterises the expressibility of modal logics in certain classes of Kripke models.

\section{Connections with Modal Logic}\label{sec:logic}

In this section, we show how to characterise each of the classes $\SBl$, $\MBl$, $\VBl$, $\SVl$, $\MVl$, $\VVl$, and $\VVcl$ by a corresponding modal logic, in the spirit of descriptive complexity theory (see Immerman~\cite{immerman99descriptive}). We show that for each class there is a modal logic that is equally expressive: for any graph problem in the class there is a formula in the modal logic that defines a solution of the graph problem; conversely, any formula in the modal logic defines a solution of some graph problem in the class.

\subsection{Logics \texorpdfstring{$\ML$}{ML}, \texorpdfstring{$\GML$}{GML}, \texorpdfstring{$\MML$}{MML}, and \texorpdfstring{$\GMML$}{GMML}}\label{ssec:logics}

Our characterisation uses \emph{basic modal logic} $\ML$, \emph{graded modal logic} $\GML$, \emph{multimodal logic} $\MML$, and \emph{graded multimodal logic} $\GMML$---see, e.g., 
Blackburn, de Rijke, and Venema
\cite{blackburn01modal} or
Blackburn, van Benthem, and Wolter
\cite{blackburn07handbook} for further details on modal logic.

\emph{Basic modal logic}, $\ML$, is obtained by extending propositional logic by a single 
(unary) modal operator $\Diamond$. 
More precisely, if $\Phi$ is a finite set of proposition symbols, then 
the set of $\ML(\Phi)$-formulas is given by the following grammar:
\[
    \varphi \, := \, q
        \,\mid\, (\varphi\wedge\varphi)
        \,\mid\, \neg\varphi
        \,\mid\, \Diamond\varphi,
        \quad \text{where } q\in \Phi.
\]

The semantics of $\ML$ is defined on Kripke models. A \emph{Kripke model} for the set $\Phi$
of proposition symbols is a tuple $\KM=(W,R,\tau)$, where $W$ is a nonempty set
of \emph{states} (or \emph{possible worlds}), $R\subseteq W^2$ is a binary relation on $W$ 
(\emph{accessibility relation}), and $\tau$ is a \emph{valuation} function 
$\tau \colon \Phi \to \pow(W)$.

The truth of an $\ML(\Phi)$-formula $\varphi$ in a state $v\in W$ of a Kripke model 
$\KM=(W,R,\tau)$ is defined recursively as follows:
\begin{alignat*}{2}
    \KM,v&\models q
    & \quad \text{iff} \quad &
    v\in \tau(q),\ \text{ for each }q\in \Phi,
    \\
    \KM,v&\models (\varphi\wedge\vartheta)
    & \quad \text{iff} \quad &
    \KM,v\models \varphi \text{ and } \KM,v\models \vartheta,
    \\
    \KM,v&\models \neg\varphi
    & \quad \text{iff} \quad &
    \KM,v\nmodels \varphi,
    \\
    \KM,v&\models \Diamond\varphi
    & \quad \text{iff} \quad &
    \KM,w\models \varphi
    \text{ for some $w\in W$ such that\ $(v,w)\in R$}.
\end{alignat*}
Usually in modal logic one defines the abbreviations $(\varphi\vee\vartheta):=\neg(\neg\varphi\wedge\neg\vartheta)$ and $\Box\varphi:=\neg\Diamond\neg\varphi$.

Classical modal logic has its roots in the philosophical analysis of the 
notion of possibility.
In classical modal logic, a modal formula $\Diamond \varphi$ is interpreted to mean that
it is \emph{possible} that $\varphi$ holds. The set $W$ of a 
Kripke model $\KM=(W,R,\tau)$ is a collection of possible worlds $v$,
or possible states of affairs. The relation $R$ connects a possible world $v$ to
exactly those worlds that can be considered to be---in one sense or another---\emph{possible}
states of affairs, when the \emph{actual} state of affairs is in fact $v$. The semantics of 
the formula $\Diamond\varphi$ reflects this idea; $\Diamond\varphi$ is true in $v$ if and only if there is a
possible state of affairs $w$ accessible from $v$ via $R$ such that $\varphi$  is true in $w$.

Modern systems of modal logic often have very little to do with the original philosophical 
motivations of the field. The reason is that modal logic and Kripke semantics seem to adapt rather well
to the requirements of a wide range of different kinds of applications in computer science
and various other fields. Our use of modal logic in this article is an example of such an
adaptation.

One of the features of basic modal logic is that it is unable to count: there is no mechanism in $\ML$ for
separating states $v$ of Kripke models based only on the number of $R$-successors of $v$.
The most direct way to overcome this defect is to add counting to the modalities. The syntax of 
\emph{graded modal logic}~\cite{fine72in}, $\GML$, extends the syntax of $\ML$ with the rules 
$\Diamond_{\geq k}\varphi$, where $k\in\mathbb{N}$. The semantics of
these graded modalities $\Diamond_{\geq k}$ is the following:
\[
    \KM,v\models \Diamond_{\geq k}\varphi
    \quad \text{iff} \quad
    \bigl\lvert\{w\in W: (v,w)\in R \text{ and } \KM,w\models \varphi \}\bigr\rvert \,\geq\, k.
\]

Up to this point we have considered modal logics with only one modality~$\Diamond$. 
\emph{Multimodal logic}, $\MML$, is the natural generalisation of $\ML$ that allows an arbitrary (finite) number of 
modalities. The modalities are usually written as $\langle \alpha\rangle$, where $\alpha\in I$ for some
index set $I$.
Given the set $I$ and a finite set $\Phi$ of proposition symbols, the set of $\MML(I,\Phi)$-formulas
is defined by the following grammar:
\[
    \varphi \, := \, q
        \,\mid\, (\varphi\wedge\varphi)
        \,\mid\, \neg\varphi
        \,\mid\, \langle \alpha\rangle\varphi,
        \quad \text{where } q\in \Phi \text{ and } \alpha \in I.
\]

The Kripke models corresponding to the multimodal language $\MML(I,\Phi)$
are of the form $\KM=(W,(R_\alpha)_{\alpha\in I},\tau)$, where $R_\alpha\subseteq W^2$
for each $\alpha\in I$, and $\tau$ is a function $\tau \colon \Phi \to \pow(W)$.

The truth  definition of $\MML(I,\Phi)$ is the same as the truth definition of $\ML$ for Boolean connectives and 
atomic formulas. For diamond formulas $\langle \alpha \rangle\varphi$ the semantics are given by the condition
\[
  \KM,v\models \langle \alpha \rangle\varphi
  \quad \text{iff} \quad
  \KM,w\models \varphi
  \text{ for some $w\in W$ s.t.\ $(v,w)\in R_\alpha$}.
\]
We can naturally extend $\MML$ by graded modalities $\langle\alpha \rangle_{\geq k}$ for each $\alpha\in I$ and 
$k\in \N$ and obtain \emph{graded multimodal logic} $\GMML(I,\Phi)$.

If the index set $I$ contains only one element $\alpha$, then  $\MML(I,\Phi)$ can be identified with $\ML(\Phi)$
simply by replacing $\langle\alpha\rangle$ with $\Diamond$. Similarly, $\GMML(\{\alpha\},\Phi)$ is identified with 
$\GML(\Phi)$.

Let $\Lo$ be a modal logic and $\varphi$ an $\Lo(I,\Phi)$-formula. The \emph{modal depth} of $\varphi$, denoted by 
$\md(\varphi)$, is defined recursively as follows:
\begin{align*}
    \md(q) &= 0 \text{ for } q\in\Phi, \\
    \md(\varphi\wedge\vartheta) &= \max\{\md(\varphi),\md(\vartheta)\}, \\
    \md(\lnot\varphi) &= \md(\varphi), \\
    \md(\langle\alpha\rangle\varphi) &= \md(\varphi)+1 \text{ for } \alpha\in I.
\end{align*}
Thus, $\md(\varphi)$ is the largest number of nested modalities in $\varphi$. 

Given a modal logic $\Lo$ and a Kripke model $\KM$ for $\Lo$,
each $\Lo$-formula $\varphi$ \emph{defines} a subset $\{v\in W\mid \KM,v\models \varphi\}$ of the set of states in $\KM$; 
this set is denoted by $\lVert\varphi \rVert^\KM$.

\subsection{Bisimulation and Definability in Modal Logic}

We will now define one of the most important concepts in modal logic, bisimulation. Bisimulation was first defined in the context of modal logic by van Benthem~\cite{vBenthem76}, who calls it a \emph{p-relation}. Bisimulation was also discovered independently in a variety of other fields. See Sangiorgi~\cite{Sangiorgi09} for the history and development of the notion.

The objective of bisimulation is to characterise definability in the corresponding modal logics, 
so that if two states $w$ and $w'$ are bisimilar they 
cannot be separated by any formula of the corresponding logic.
 Bisimulation can be defined in a canonical way
for each of the logics $\ML$, $\GML$, $\MML$, and $\GMML$.

Bisimulation for $\MML$ is defined as follows.
Let
\begin{align*}
    \KM &= \bigl(W,\,(R_\alpha)_{\alpha\in I},\,\tau\bigr), \\
    \KM' &= \bigl(W',\,(R'_\alpha)_{\alpha\in I},\,\tau'\bigr)
\end{align*}
be Kripke models for a set $\Phi$ of proposition symbols. A nonempty relation $Z\subseteq W\times W'$ 
is a \emph{bisimulation} between $\KM$ and $\KM'$ if the following conditions hold.
\begin{enumerate}[label=(B\arabic*)]
\item\label{b1} If $(v,v')\in Z$, then $v\in\tau(q)$ iff $v'\in\tau'(q)$ for all $q\in\Phi$.
\item\label{b2} If $(v,v')\in Z$ and $(v,w)\in R_\alpha$ for some $\alpha\in I$, then there is a $w'\in W'$ such that $(v',w')\in R'_\alpha$ and $(w,w')\in Z$. 
\item\label{b3} If $(v,v')\in Z$ and $(v',w')\in R'_\alpha$ for some $\alpha\in I$, then there is a $w\in W$ such that $(v,w)\in R_\alpha$ and $(w,w')\in Z$.
\end{enumerate}
If there is a bisimulation $Z$ such that $(v,v')\in Z$, we say that $v$ and $v'$ are \emph{bisimilar}.

For the basic modal logic $\ML$, bisimulation is defined in the same way just by replacing the relations 
$R_\alpha$, $\alpha\in I$, in conditions \ref{b2} and \ref{b3} with the single relation $R$. 

In the case of the graded modal logic $\GML$, we use the notion
of \emph{graded bisimulation}: a nonempty relation $Z\subseteq W\times W'$ 
is a graded bisimulation between $\KM=(W,R,\tau)$ and $\KM'=(W',R',\tau')$ if it satisfies condition 
\ref{b1} and the following modifications of \ref{b2} and \ref{b3}; we use 
 the notation $R(v)=\{w\in W:\nobreak (v,w)\in R\}$.
\begin{enumerate}[label=(B\arabic*$^*$),start=2]
\item If $(v,v')\in Z$ and $X\subseteq R(v)$, then there is a set $X'\subseteq R'(v')$ such that $|X'|=|X|$ and
for each $w'\in X'$ there is a $w\in X$ with $(w,w')\in Z$. 
\item If $(v,v')\in Z$ and $X'\subseteq R'(v')$, then there is a set $X\subseteq R(v)$ 
such that $|X|=|X'|$ and
for each $w\in X$ there is a $w'\in X'$ with $(w,w')\in Z$. 
\end{enumerate}
We say that $v$ and $v'$ are \emph{g-bisimilar} if there is a graded bisimulation $Z$ 
such that $(v,v')\in Z$.

The definition of graded bisimulation for $\GMML$ is the obvious generalisation of the definition above
to the case of several relations $R_\alpha$ instead of a single relation $R$. 

The notion of graded bisimulation was first formulated by de Rijke~\cite{dRijke00}. Our 
definition follows the formulation of Conradie~\cite{Conradie02}. We state next the main result 
concerning bisimulation. For the proof of Fact~\ref{fact:bisim}a, we refer to Blackburn et al.~\cite{blackburn01modal}.
The proof of Fact~\ref{fact:bisim}b can be found in Conradie~\cite{Conradie02}.

\begin{fact}\label{fact:bisim}
{\normalfont(a)}~Let $\Lo$ be $\ML$ or $\MML$, and
let $\KM$ and $\KM'$ be Kripke models, $v\in W$ and $v'\in W'$. If $v$ and $v'$ are bisimilar,
then for all $\Lo$-formulas $\varphi$
\[
\KM,v\models \varphi \text{ iff } \KM',v'\models \varphi.
\]

{\normalfont(b)}~Let $\Lo$ be $\GML$ or $\GMML$, and
let $\KM$ and $\KM'$ be Kripke models, $v\in W$ and $v'\in W'$. If $v$ and $v'$ are g-bisimilar,
then for all $\Lo$-formulas $\varphi$
\[
\KM,v\models \varphi \text{ iff } \KM',v'\models \varphi.
\]
\end{fact}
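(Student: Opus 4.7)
The plan is to prove both parts by structural induction on $\varphi$, treating the multimodal logics $\MML$ and $\GMML$; the single-modality cases $\ML$ and $\GML$ are then special instances obtained by taking $|I|=1$. The base case $\varphi=q$ is immediate from condition \ref{b1}, which is shared by both notions of bisimulation. The Boolean cases $\neg\psi$ and $\psi\wedge\chi$ follow directly from the induction hypothesis applied to subformulas and do not invoke any of the clauses governing the accessibility relations.

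For part (a), the only nontrivial case is $\varphi=\langle\alpha\rangle\psi$. Assume $(v,v')\in Z$ and $\KM,v\models\langle\alpha\rangle\psi$. Pick a witness $w\in W$ with $(v,w)\in R_\alpha$ and $\KM,w\models\psi$; condition \ref{b2} supplies $w'\in W'$ with $(v',w')\in R'_\alpha$ and $(w,w')\in Z$, and the induction hypothesis gives $\KM',w'\models\psi$, whence $\KM',v'\models\langle\alpha\rangle\psi$. The converse direction is symmetric, using \ref{b3} in place of \ref{b2}.

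For part (b), the key case is $\varphi=\langle\alpha\rangle_{\geq k}\psi$. Suppose $\KM,v\models\langle\alpha\rangle_{\geq k}\psi$, and choose $k$ distinct elements $X=\{w_1,\dots,w_k\}\subseteq R_\alpha(v)$ with $\KM,w_i\models\psi$ for each $i$. Apply the graded clause (B2$^*$) to produce $X'\subseteq R'_\alpha(v')$ with $|X'|=k$ such that every $w'\in X'$ is $Z$-related to some $w\in X$; the induction hypothesis, together with $\KM,w\models\psi$ for all $w\in X$, then yields $\KM',w'\models\psi$ for every $w'\in X'$, so $\KM',v'$ has at least $k$ distinct $\psi$-witnesses in $R'_\alpha$. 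The reverse direction uses (B3$^*$) in the same manner.

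The step to be careful about is precisely the counting in part (b). A naive attempt would be to apply the ordinary clause \ref{b2} to each $w_i$ individually and collect the resulting witnesses; this can fail because two distinct $w_i,w_j$ may be mapped to the same $w'$, producing fewer than $k$ elements. The graded formulation (B2$^*$) is designed to circumvent exactly this pitfall: it delivers a set $X'$ of the \emph{same} cardinality as $X$, and crucially requires the $Z$-matching to go from $X'$ back to $X$, which is all we need for the induction step since we only require that each member of $X'$ satisfies $\psi$, not that the matching be a bijection.
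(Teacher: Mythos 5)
Your proof is correct. Note that the paper itself does not prove Fact~1: it defers part (a) to Blackburn, de Rijke, and Venema and part (b) to Conradie. Your structural induction is exactly the standard argument from that literature, and you handle the one genuinely delicate point correctly --- in the graded case you must use the set-based clause (B2$^*$)/(B3$^*$) rather than applying the ordinary back-and-forth clause witness-by-witness, since the latter could collapse distinct witnesses; your observation that the matching in (B2$^*$) runs from $X'$ back to $X$, which suffices because only each member of $X'$ needs a $\psi$-satisfying $Z$-partner, is precisely why the induction step goes through.
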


In what follows, we will develop a connection between modal logic and weak models of distributed computing. Informally, the states of a Kripke model will correspond to the nodes of a distributed system, and bisimilar states will correspond to nodes that are unable to distinguish their neighbourhoods, no matter which distributed algorithm we use. With the help of this connection, we can then use bisimulation in Section~\ref{sec:separ} to prove separations of problem classes.

\subsection{Characterising Constant-Time Classes by Modal Logics}\label{ssec:char-logic}

There is a natural correspondence between the framework for distributed computing defined in this paper
and the logics $\ML$, $\GML$, $\MML$, and $\GMML$. 
For any input graph $G$ and port numbering $p$ of $G$, the pair $(G,p)$ can be transformed
into a Kripke model $\KM(G,p)=(W,(R_\alpha)_{\alpha\in I},\tau)$ in a canonical way.
Given a local algorithm $\aA$, its execution can then be simulated by a modal 
formula $\varphi$. The crucial idea is that the truth condition for a diamond formula 
$\langle \alpha\rangle\psi$ is interpreted as communication between the nodes:
\[
\begin{split}
   \KM,v\models\langle \alpha\rangle\psi
   \quad\text{iff}\quad
   &\text{$v$ receives the message ``$\psi$ is true"} \\[-2pt]
   &\text{from some $u$ such that $(v,u)\in R_\alpha$.}
\end{split}
\]
Conversely, for any modal formula $\varphi$, there is a local algorithm $\aA$ that can evaluate the truth of $\varphi$ in the Kripke model $\KM(G,p)$.

The general idea of the correspondence between modal logic and distributed algorithms is described in Table~\ref{tab:logic}. We will assume that $\aA$ produces a one-bit output, i.e., $Y = \{0,1\}$; other cases can be handled by defining a separate formula for each output bit.

\begin{table}
    \centering
    \begin{tabular}{@{}lll@{}}
        \toprule
        Modal logic && Distributed algorithms \\
        \midrule
        \multirow{2}{*}{Kripke model $\KM = (W,(R_\alpha)_{\alpha\in I},\tau)$}
            & \multirow{2}{*}{\huge \{}
            & input graph $G = (V,E)$ \\
           && port numbering $p$ \\
        \addlinespace
        states $W$ && nodes $V$ \\
        relations $R_\alpha$, $\alpha\in I$ && edges $E$ and port numbering $p$ \\
        \addlinespace
        valuation $\tau$ & \multirow{2}{*}{\huge \}} & \multirow{2}{*}{node degrees (initial state)} \\
        proposition symbols $q_1, q_2, \dotsc$ \\
        \midrule
        formula $\varphi$ && algorithm $\aA$ \\
        formula $\varphi$ is true in state $v$ && algorithm $\aA$ outputs $1$ in node $v$ \\
        modal depth of $\varphi$ && running time of $\aA$ \\
        \bottomrule
    \end{tabular}
    \caption{Correspondence between modal logic and distributed algorithms.}\label{tab:logic}
\end{table}

We start by defining the Kripke models $\KM(G,p)$. There are in fact four different versions of
$\KM(G,p)$, reflecting the fact that algorithms in the lower classes do not use all the information
encoded in the port numbering. 
Let $G=(V,E)\in\F(\Delta)$, and let $p$ be
a port numbering of $G$. The accessibility relations used in the different versions of 
$\KM(G,p)$ are the following; see Figure~\ref{fig:relations} for illustrations:
\[
    R_{(i,j)}=\{(u,v)\in V\times V: p((v,j))=(u,i)\}  \quad \text{for each pair } (i,j)\in [\Delta]\times [\Delta].
\]
Given $\Delta$, these relations together with the vertex set $V$ contain the same information as the pair $(G,p)$: graph $G$ and
port numbering $p$ can be reconstructed from the pair
\[
    \bigl(V,(R_{(i,j)})_{(i,j)\in [\Delta]\times [\Delta]}\bigr).
\]
Since algorithms in classes below $\Vector$ have access to a restricted part of the information in $p$, 
we need alternative accessibility relations with corresponding restrictions on their information about~$p$:
\begin{alignat*}{2}
    R_{(i,*)} &= \bigcup_{j\in [\Delta]} R_{(i,j)}
    & \quad & \text{for each } i\in [\Delta], \\[3pt]
    R_{(*,j)} &=\bigcup_{i\in [\Delta]} R_{(i,j)}
    & \quad & \text{for each } j\in [\Delta], \\[3pt]
    R_{(*,*)} &=\bigcup_{(i,j)\in [\Delta]\times [\Delta]} R_{(i,j)}.
\end{alignat*}
Note that $R_{(*,*)}=\{(u,v): \{u,v\}\in E\}$ is the edge set $E$ interpreted as a symmetric relation.

\begin{figure}[t]
    \centering
    \includegraphics[page=\PRelations]{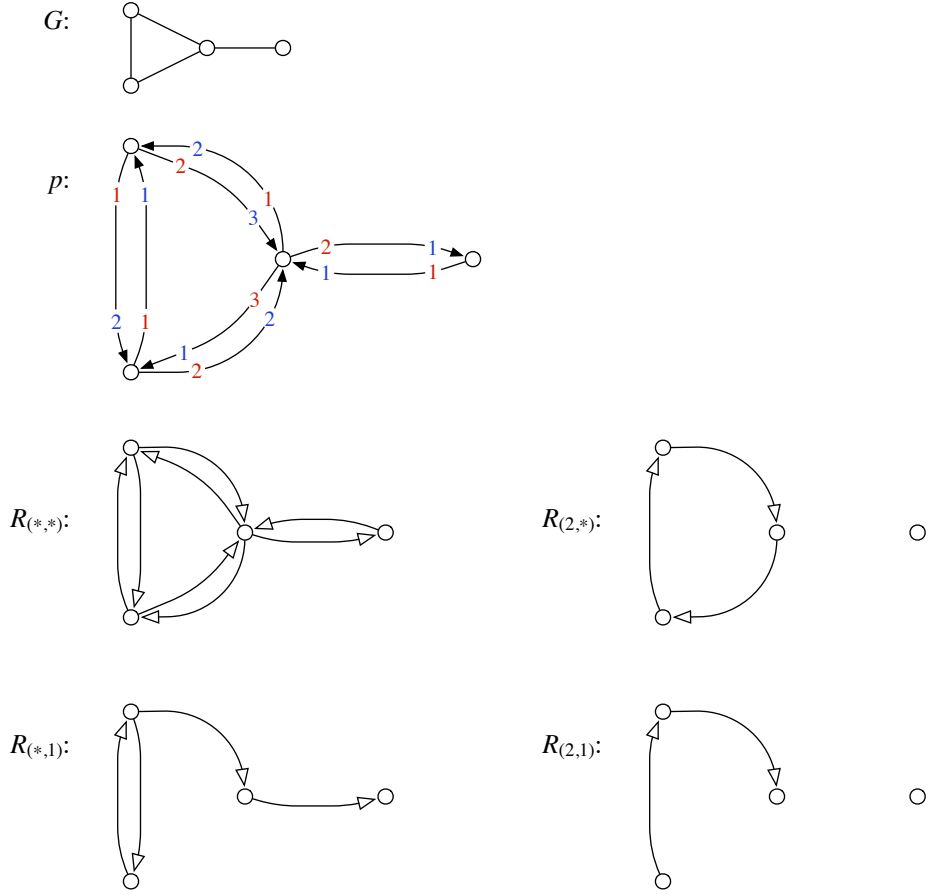}
    \caption{Relations $R_{(i,j)}$---note the directions of the arrows.}\label{fig:relations}
\end{figure}

In addition to the accessibility relations, we encode the local information on the degrees of vertices 
into a valuation
$\tau \colon \Phi_\Delta \to \pow(V)$, where $\Phi_\Delta=\{q_i: i\in[\Delta]\}$. The valuation $\tau$ is given by
\[
    \tau(q_i)=\{v\in V: \deg(v)=i\}.
\]

The four versions of a Kripke model corresponding to graph $G$ and port numbering $p$ are now defined as follows:
\begin{alignat*}{2}
    \KVV(G,p) &= (V, (R_\alpha)_{\alpha\in I^{\Delta}_{+,+}},\tau),
    \quad & \text{where} \quad 
    I^{\Delta}_{+,+} &= [\Delta]\times [\Delta],
    \\
    \KMV(G,p) &= (V, (R_\alpha)_{\alpha\in I^{\Delta}_{-,+}},\tau),
    \quad & \text{where} \quad 
    I^{\Delta}_{-,+} &= \{*\}\times [\Delta],
    \\
    \KVB(G,p) &= (V, (R_\alpha)_{\alpha\in I^{\Delta}_{+,-}},\tau),
    \quad & \text{where} \quad 
    I^{\Delta}_{+,-} &= [\Delta]\times\{*\},
    \\
    \KMB(G,p) &= (V, (R_\alpha)_{\alpha\in I^{\Delta}_{-,-}},\tau),
    \quad & \text{where} \quad 
    I^{\Delta}_{-,-} &= \{(*,*)\}.
\end{alignat*}
For all $a,b\in \{-,+\}$, we denote the class of all Kripke models of the form $\KM_{a,b}(G,p)$ by $\CM_{a,b}$.
Furthermore, we denote by $\CVVc$ the subclass of $\CVV$ consisting of the models $\KVV(G,p)$,
where $p$ is a consistent port numbering of $G$.

In order to give a precise formulation to the correspondence between modal logics and the constant-time classes of graph problems, we define the concept of modal formulas solving graph problems. Without loss of generality, we consider here only problems $\Pi$ such that the solutions $S\in\Pi(G)$ are functions $V\to \{0,1\}$, or equivalently, subsets of $V$. This is a natural restriction, since a modal formula $\psi$ defines a subset
\[
    \lVert\psi\rVert^{\KM_{a,b}(G,p)} := \{v\in V\mid \KM_{a,b}(G,p),v\models \psi\}
\]
of the vertex set~$V$. Other cases can be handled by using tuples of formulas.

Let $a,b\in \{-,+\}$, and let $\Psi=(\psi_1,\psi_2,\ldots)$ be a sequence of modal formulas such that 
$\psi_\Delta$ is in the signature $(I^{\Delta}_{a,b},\Phi_\Delta)$ for each $\Delta\in\N$. Then $\Psi$  
\emph{defines a solution} for a graph 
problem $\Pi$ on the class $\CM_{a,b}$ if the following condition holds:
\begin{itemize}
\item For all $\Delta\in\N$, all $G\in\F(\Delta)$, and all port numberings $p$ of $G$,
the subset $\lVert\psi_\Delta\rVert^{\KM_{a,b}(G,p)}$ defined by the formula $\psi_\Delta$ in the model
$\KM_{a,b}(G,p)$ is in the set $\Pi(G)$. 
\end{itemize}
Furthermore, the sequence $\Psi$ defines a solution for $\Pi$ on the class $\CVVc$, if the condition above
with $a=b=+$ holds for all \emph{consistent} port numberings $p$. 

Note that any sequence $\Psi=(\psi_1,\psi_2,\ldots)$ of modal formulas as above 
gives rise to a canonical graph problem $\Pi_\Psi$ that it defines a solution for: 
for each graph
$G$, the solution set $\Pi_\Psi(G)$  simply consists of the sets 
$\lVert\psi_\Delta\rVert^{\KM_{a,b}(G,p)}$ where $G\in\F(\Delta)$ and
$p$ ranges over the (consistent) port numberings of $G$.

Let $\Lo$ be a modal logic, let $a,b\in\{-,+\}$, and let $C$ be a class of graph problems.
We say that $\Lo$ \emph{is contained in} $C$ on $\CM_{a,b}$, in symbols $\Lo\le C$
on $\CM_{a,b}$, if the following condition holds:
\begin{itemize}
\item If $\Psi=(\psi_1,\psi_2,\ldots)$ is a sequence of formulas such that 
$\psi_\Delta\in\Lo(I^{\Delta}_{a,b},\Phi_\Delta)$ for all $\Delta\in\N$, then $\Pi_\Psi\in C$.
\end{itemize}
Furthermore, we say that $\Lo$ \emph{simulates} $C$ on $\CM_{a,b}$, in symbols $C\le\Lo$
on $\CM_{a,b}$, if the following condition holds:
\begin{itemize}
\item For every graph problem $\Pi\in C$ there is a sequence $\Psi=(\psi_1,\psi_2, \ldots)$
of formulas such that $\psi_\Delta\in\Lo(I^{\Delta}_{a,b},\Phi_\Delta)$ for all $\Delta\in\N$, 
which defines a solution for $\Pi$ on $\CM_{a,b}$.
\end{itemize}
Finally, we say that $\Lo$ \emph{captures} $C$ on $\CM_{a,b}$ if both $\Lo\le C$
and $C\le\Lo$ on $\CM_{a,b}$.

The notions of $\Lo$ being contained in $C$ on $\CVVc$, $\Lo$ 
simulating $C$ on $\CVVc$, and $\Lo$ capturing $C$ on $\CVVc$
are defined similarly with the obvious restriction to consistent port numberings.

The main result of this section is that the constant-time version of each of the 
classes $\VVc$, $\VV$, $\MV$, $\SV$, $\VB$,
$\MB$, and $\SB$ is captured by one of the modal logics $\MML$, $\ML$, $\GMML$, and $\GML$ on an appropriate class $\CM_{a,b}$.

\begin{theorem}\label{thm:logic}
{\normalfont(a)} $\MML$ captures $\VVcl$ on $\CVVc$.

{\normalfont(b)} $\MML$ captures $\VVl$ on $\CVV$.

{\normalfont(c)} $\GMML$ captures $\MVl$ on $\CMV$.

{\normalfont(d)} $\MML$ captures $\SVl$ on $\CMV$.

{\normalfont(e)} $\MML$ captures $\VBl$ on $\CVB$.

{\normalfont(f)} $\GML$ captures $\MBl$ on $\CMB$.

{\normalfont(g)} $\ML$ captures $\SBl$ on $\CMB$.
\end{theorem}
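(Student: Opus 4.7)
The plan is to prove both inclusions $\Lo \le C$ (simulation: every formula is computed by an algorithm) and $C \le \Lo$ (definability: every algorithm is defined by a formula) in a uniform way across the seven parts, exploiting the systematic correspondence of Table~\ref{tab:logic}. The modal depth of a formula matches the running time of the algorithm, the atom $q_i$ holds exactly at nodes of degree $i$ (evaluated from $z_0$), and each index $\alpha \in I^\Delta_{a,b}$ of the chosen signature corresponds to a communication primitive available in the matching algorithm class.

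For the simulation direction $\Lo \le C$, given $\Psi = (\psi_1, \psi_2, \ldots)$ with $\psi_\Delta \in \Lo(I^\Delta_{a,b}, \Phi_\Delta)$, I would construct a local algorithm $\aA_\Delta$ that computes the truth value of every subformula of $\psi_\Delta$ at every node by induction on modal depth. Atoms are determined from the degree, Boolean connectives are internal, and in each modal step each node transmits its currently computed truth bits. The restrictions on the index set match the restrictions on the algorithm class: index $(i,j)$ tags bits by both incoming port $i$ and outgoing port $j$, which is available only in full $\sVector$; index $(*,j)$ abstracts away $v$'s incoming port while tagging by the sender's outgoing port $j$, matching $\sMultiset$ (with grading, for $\GMML$) or $\sSet$ (without, for $\MML$) with $\Vector$ output; index $(i,*)$ is the mirror case, matching $\sBroadcast$ output with indexed input; and index $(*,*)$ drops both, matching $\sMultiset \cap \sBroadcast$ (with grading, for $\GML$) or $\sSet \cap \sBroadcast$ (without, for $\ML$). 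The simulation runs in $\md(\psi_\Delta)$ rounds, hence is local.

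For the definability direction $C \le \Lo$, I would argue by induction on $t$ that after $t$ rounds of a local algorithm $\sA$ the state of a node $v$ depends only on a ``$t$-step type'' captured by a formula $\chi_v^t \in \Lo(I^\Delta_{a,b}, \Phi_\Delta)$ of modal depth $t$. Set $\chi_v^0 := q_{\deg(v)}$ and define
\begin{equation*}
\chi_v^{t+1} \,:=\, \chi_v^0 \,\wedge\, \bigwedge_{\alpha \in I^\Delta_{a,b}} \Theta_\alpha^t(v),
\end{equation*}
where $\Theta_\alpha^t(v)$ records the tuple, multiset, or set of depth-$t$ types of $v$'s $R_\alpha$-neighbours, using $\langle \alpha \rangle$ for the non-graded logics $\ML$ and $\MML$, and $\langle \alpha \rangle_{\geq k}$ for the graded logics $\GML$ and $\GMML$. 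Because $\Delta$ is bounded, only finitely many types arise at each depth, so each $\chi_v^t$ is a genuine $\Lo$-formula. The formula defining $\sA$'s output is then the finite disjunction of $\chi_v^{T(\Delta)}$ over a set of representatives of accepting $T(\Delta)$-step types.

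The main obstacle is the correctness of the type enumeration: one must verify that algorithmic indistinguishability in class $C$ coincides with the bisimulation equivalence of $\Lo$ on $\CM_{a,b}$. Concretely, the key lemma is that if $v$ in $\KM_{a,b}(G,p)$ and $v'$ in $\KM_{a,b}(G',p')$ satisfy the same $\chi^T$, then $\sA$ produces the same output at $v$ and at $v'$ in round $T$. This is proved by simultaneous induction on $t$, showing that the vector, multiset, or set of messages arriving at $v$ in round $t{+}1$ depends only on the depth-$t$ types of $v$'s relevant neighbours; the $\sBroadcast$ restriction is used to discard outgoing-port information, while $\sMultiset$ or $\sSet$ discards incoming-port information. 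The converse---that $\chi^T$-equivalent states cannot be separated by any $\Lo$-formula of depth $\le T$---follows from Fact~\ref{fact:bisim}, since $\chi_v^T$ characterises the $T$-step (graded) bisimulation class of $v$. With this lemma established, the seven parts follow by matching the modal logic (graded versus not, multimodal versus basic) and the Kripke class to the class-specific restrictions on $C$.
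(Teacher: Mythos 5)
Your proposal is correct, and its simulation direction ($\Lo \le C$) is essentially the paper's: both evaluate all subformulas bottom-up by modal depth, with the index set $I^\Delta_{a,b}$ dictating which communication primitive of the matching algorithm class is used, giving a running time of $\md(\psi_\Delta)$ plus a constant. The definability direction ($C \le \Lo$) takes a genuinely different route. The paper encodes the \emph{execution} of $\aA_\Delta$ directly: it builds formulas $\varphi_{z,t}$ (``the state at time $t$ is $z$''), $\vartheta_{m,j,t}$ (``message $m$ is sent to port $j$ in round $t$''), and $\chi_{m,i,j,t} = \langle(i,j)\rangle\vartheta_{m,j,t}$ (``message $m$ is received from port $i$''), defined as Boolean combinations mirroring $z_0$, $\mu$ and $\delta$, with finiteness established by showing each family $\Psi_t$, $\Theta_t$, $\Xi_t$ contains only finitely many distinct formulas. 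You instead construct characteristic (Hintikka) formulas $\chi_v^t$ for the depth-$t$ bisimulation types and output a disjunction over accepting types, with the algorithm entering only through your key lemma that the output is a function of the depth-$T$ type. Both routes work. The paper's is more economical and self-certifying---the formulas carry their intended meaning by construction, and the state is generally a coarser invariant than the full type---whereas yours is the standard modal-logic argument and proves something slightly stronger (any function of the $T$-step type is definable), at the cost of verifying separately that depth-$(t{+}1)$ types refine depth-$t$ types and that the incoming vector, multiset, or set of messages is determined by the neighbours' depth-$t$ types together with the relation indices; that verification is exactly where the class restrictions ($\Broadcast$ discarding the sender's port, $\Multiset$/$\Set$ discarding the receiver's port) must be invoked. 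One small remark: in the signatures used for parts (b) and (e) each relation $R_{(i,j)}$ and $R_{(i,*)}$ has out-degree at most one, so your $\Theta_\alpha^t$ there degenerates to asserting the type of a single successor or its absence, which is why those parts need no graded modalities.
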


\paragraph{Proof of Theorem~\ref{thm:logic}: Overview.}

Note first that (a) follows directly from (b) by restricting to consistent port numberings.
Furthermore, the only difference between  $\GMML$ and $\MML$ is the ability to
count the number of neighbours satisfying a formula, which corresponds 
in a natural way to the difference between algorithms in $\Multiset$ and $\Set$.
Hence, we omit the proof of (d), as it is obtained from the proof of (c) by minor modifications.
Similarly, the proof of (g) is a minor modification of the proof of (f), so we omit it, too. 

Thus, we are left with the task of proving claims (b), (c), (e) and (f).
The structure of the proofs of all these claims is the same---there are 
differences only in technical details.
Hence, we divide the proof in four parts as follows. 
\begin{enumerate}[label=\arabic*.]
\item We prove the first half of (b): $\MML\le\VVl$ on $\CVV$. 
\item We describe the changes in part~1 needed for proving the first halves of (c), (e) and (f). 
\item We prove the second half of (b): $\VVl\le\MML$ on $\CVV$. 
\item We describe the changes in part~3 needed for proving the second halves of (c), (e) and (f). 
\end{enumerate}

\paragraph{Proof of Theorem~\ref{thm:logic}, Part 1.}

Assume that $\Psi=(\psi_1,\psi_2,\ldots)$ is a sequence of formulas with 
$\psi_\Delta\in\MML(I^{\Delta}_{+,+},\Phi_\Delta)$ for each $\Delta\in\N$. 
We give for each $\Delta\in\N$ a local algorithm $\aA_\Delta\in\Vector$ that simulates
the recursive evaluation of the truth of $\psi_\Delta$ on a Kripke model $\KVV(G,p)$. 

Let $\Sigma$ be the set of all subformulas of $\psi_\Delta$,
and let $D_j$, $j\in [\Delta]$, be the subset of $\Sigma$ consisting of
subformulas $\eta$ such that $\langle\alpha \rangle\eta\in \Sigma$, where $\alpha=(i,j)$ for some $i\in[\Delta]$.
The set of stopping states, intermediate states, and messages of the algorithm $\aA_\Delta$ (see
Section~\ref{ssec:distalg}) are defined as follows:
\begin{align*}
    Y &:= \{0,1\}, \\
    Z &:= \bigl\{f: f\text{ is a function $\Sigma\to \{0,1,U\}$}\bigr\}, \\
    M &:= \bigcup_{j\in[\Delta]}\bigl\{h: h\text{ is a function } D_j\to \{0,1,U\}\times\{j\}\bigr\}\cup \{m_0\}.
\end{align*}
The idea behind these choices is that before stopping, the state $x_t(v)$ of the computation of $\aA_\Delta$
on a node $v$ of an input $(G,p)$ encodes the truth value of each subformula of $\psi_\Delta$
with modal depth at most $t$; for subformulas with modal depth greater than $t$, the
state $x_t(v)$ gives the value $U$ (undefined).
In other words, our aim is to make sure that at each step $t$ of the computation, $x_t(v)=f$, where $f\in Z$ is the function defined by
\[
   f(\eta)=  \begin{cases} 0, \text{ if $\md(\eta)\le t$ and }\KVV(G,p),v\nmodels\eta \\ 
   1,  \text{ if $\md(\eta)\le t$ and }  \KVV(G,p),v\models\eta  \\ 
   U, \text{ if $\md(\eta)> t$} \end{cases}
\]
for each $\eta\in \Sigma$.
First, we define the function $z_0\colon [\Delta]\to Y\cup Z$ that gives the initial state $x_0(v)=z_0(\deg(v))$ of each node~$v$.
For each $i\in[\Delta]$, we set $z_0(i) = g$, where $g$ is the function defined recursively as follows:
\begin{alignat*}{2}
    g(\eta)&=1
    &\text{for }\eta&= q_i\in \Sigma,
    \\[1ex]
    g(\eta)&=0
    &\text{for }\eta&= q_j\in \Sigma,\\ && j&\in [\Delta]\setminus\{i\},
    \\[1ex]
    g(\eta)&=\begin{cases}
        0, \text{ if $0\in\{g(\vartheta),g(\gamma)\}\subseteq\{0,1\}$} \\
        1, \text{ if $\{g(\vartheta),g(\gamma)\}=\{1\}$} \\ 
        U, \text{ if $U\in\{g(\vartheta),g(\gamma)\}$}
    \end{cases}
    &\text{for }\eta&=(\vartheta\land\gamma)\in \Sigma,
    \\[1ex]
    g(\eta)&=\begin{cases}
        0, \text{ if $g(\vartheta)=1$} \\ 
        1, \text{ if $g(\vartheta)=0$} \\ 
        U, \text{ if $g(\vartheta)=U$}
    \end{cases}
    &\text{for }\eta&=\lnot\vartheta\in \Sigma,
    \\[1ex]
    g(\eta)&=U
    &\text{for }\eta&=\langle\alpha\rangle\vartheta \in \Sigma.
\end{alignat*}

If a node $v$ of the input graph $G$ is in the state $x_t(v)=f\in Z$ at time step~$t$, then the
message $\mu(f,j)$ it sends to its port $j\in [\deg(v)]$ at step $t+1$ is obtained from the restriction 
of $f$ to the
set $D_j$ by adding $j$ as a marker: that is, $\mu(f,j)$ is the function 
$h\colon D_j\to \{0,1,U\}\times\{j\}$ such that $h(\eta)=(f(\eta),j)$ for all $\eta\in D_j$. 

Finally, the state transition function $\delta$ of $\aA_\Delta$ is described as follows.
Assume that the state of a node $v$ at time $t$ is $x_t(v)=f\in Z$, and the vector of messages
it receives at time $t+1$ from the ports is $\vec{a}_{t+1}(v)=(h_1,\ldots,h_\Delta)$. If
$f(\psi_\Delta)=U$, then $x_{t+1}(v)$ is the function $g\in Z$ defined as follows: 
\begin{enumerate}
\item For each $\eta\in \Sigma$ with $f(\eta)\ne U$, we set $g(\eta)=f(\eta)$.
\item For each $\eta\in \Sigma$ with $f(\eta)=U$, we define $g(\eta)$ by the following recursion:
\begin{alignat*}{2}
    \tag{$\delta_\land$}
    g(\eta)&=\begin{cases}
        0, \text{ if $0\in\{g(\vartheta),g(\gamma)\}\subseteq\{0,1\}$} \\ 
        1, \text{ if $\{g(\vartheta),g(\gamma)\}=\{1\}$} \\ 
        U, \text{ if $U\in\{g(\vartheta),g(\gamma)\}$}
    \end{cases}
    &\text{for }\eta&=(\vartheta\land\gamma)\in \Sigma,
    \\[1ex]
    \tag{$\delta_\lnot$}
    g(\eta)&=\begin{cases}
        0, \text{ if $g(\vartheta)=1$} \\ 
        1, \text{ if $g(\vartheta)=0$} \\ 
        U, \text{ if $g(\vartheta)=U$}
    \end{cases}
    &\text{for }\eta&=\lnot\vartheta\in \Sigma,
    \\[1ex]
    \tag{$\delta_\Diamond$}
    g(\eta)&=\begin{cases}
        0, \text{ if $f(\vartheta) \ne U$ and $h_i(\vartheta)\ne(1,j)$} \\
        1, \text{ if $f(\vartheta) \ne U$ and $h_i(\vartheta)=(1,j)$} \\ 
        U, \text{ if $f(\vartheta) = U$}
    \end{cases}
    &\text{for }\eta&=\langle (i,j)\rangle\vartheta\in \Sigma.
\end{alignat*}
For convenience, we interpret $m_0$ as a function with $m_0(\vartheta)=(0,1)$ for each subformula $\vartheta$.
\end{enumerate}
On the other hand, if $f(\psi_\Delta)\ne U$, we let $x_{t+1}(v)=f(\psi_\Delta)\in Y$.

It is now straightforward to prove by induction on modal depth that the following holds for any 
input graph $G\in\F(\Delta)$, port numbering $p$ of $G$, and node $v$ of $G$:
\begin{itemize}
\item If $\eta\in \Sigma$, $\md(\eta)\le t\le\md(\psi_\Delta)$, and $x_t(v)=f\in Z$, then $f(\eta)\in\{0,1\}$ and $f(\eta)=1$ iff
$\KVV(G,p),v\models\eta$.
\end{itemize}
Thus, if $t=\md(\psi_\Delta)$ and $x_t(v)=f$, then $f(\psi_\Delta)$ reveals the truth value of
$\psi_\Delta$ on~$v$. This means that  the computation of $\aA_\Delta$ stops at step $t+1$,
and the output $x_{t+1}(v)$ on node $v$ is $1$ if and only if $\KVV(G,p),v\models\psi_\Delta$. 
In other words, the running time of $\aA_\Delta$ is the constant $\md(\psi_\Delta)+1$, and its output on the input 
$(G,p)$ is the set $\lVert \psi_\Delta\rVert^{\KVV(G,p)}$.
Hence, the sequence $\sA=(\aA_1,\aA_2,\ldots)$ of algorithms solves the graph problem $\Pi_\Psi$,
and we conclude that $\Pi_\Psi\in\VVl$.

\paragraph{Proof of Theorem~\ref{thm:logic}, Part 2.}

We will now consider the proofs of the first halves of (c), (e) and (f). 
In each of these cases, we are given a formula $\psi_\Delta$ in 
the corresponding modal logic, and we define an algorithm $\aA_\Delta$
which simulates the recursive truth definition of $\psi_\Delta$. 
The definitions of the state sets $Y$ and $Z$, as well as the definition of the 
initial state function $z_0$ remain unchanged in all cases. 

However, since the modal operators occurring in subformulas of $\Psi_\Delta$
are different in each of the cases,
the sets $D_j$, $j\in [\Delta]$ have to be redefined accordingly.
Moreover, in cases (e) and (f), we have to remove the markers $j$ from the
messages, since the algorithm $\aA_\Delta$ should be in the class $\Broadcast$.
Thus, the message set $M$ and the message constructing function
$\mu$ have to be redefined for the proof of (e) and (f). 
Finally, in all cases, the clause ($\delta_\Diamond$) in the recursive definition of 
the next state $x_{t+1}(v)$ has to be modified according to the semantics of the
corresponding modal operators.  Below, we list  these modifications for each case
separately.

\begin{itemize}
\item[(c)] For each $j\in [\Delta]$, set $D_j$ is redefined as
    \[
        D_j := \{\eta: \langle (*,j)\rangle_{\ge k}\eta\in\Sigma \text{ for some }k\in\N \}.
    \]
    Clause ($\delta_\Diamond$) is replaced with
    \[
        \tag{$\delta'_\Diamond$}
        g(\eta)=\begin{cases}
            0, \text{ if $f(\vartheta) \ne U$ and $|H|<k$} \\ 
            1, \text{ if $f(\vartheta) \ne U$ and $|H|\ge k$} \\ 
            U, \text{ if $f(\vartheta) = U$}
        \end{cases}
        \text{for }\eta=\langle (*,j)\rangle_{\ge k}\vartheta\in \Sigma,
    \]
    where
    \[
        H = \{i\in[\Delta]: h_i(\vartheta)=(1,j)\}.
    \]
\item[(e)] The definition of $(D_j)_{j\in [\Delta]}$ is replaced with
    \[
        D := \{\eta: \langle (i,*)\rangle\eta\in\Sigma\}.
    \]
    Set $M$ is redefined as
    \[
        M := \{h: h\text{ is a function } D\to \{0,1,U\}\}\cup \{m_0\}.
    \]
    Function $\mu(f,j)$ is redefined to be the restriction of $f$ to the set $D$.
    Clause ($\delta_\Diamond$) is replaced with
    \[
        \tag{$\delta''_\Diamond$}
        g(\eta)=\begin{cases}
            0, \text{ if $f(\vartheta) \ne U$ and $h_i(\vartheta)=0$} \\ 
            1, \text{ if $f(\vartheta) \ne U$ and $h_i(\vartheta)=1$} \\ 
            U, \text{ if $f(\vartheta) = U$}
        \end{cases}
        \text{ for }\eta=\langle (i,*)\rangle\vartheta\in \Sigma.
    \]
    Here we interpret $m_0$ as a function with $m_0(\vartheta) = 0$ for all $\vartheta$.
\item[(f)] The definition of $(D_j)_{j\in [\Delta]}$ is replaced with 
    \[
        D' := \{\eta: \langle (*,*)\rangle_{\ge k}\eta\in\Sigma\text{ for some }k\in\N \}.
    \]
    Set $M$ is redefined as
    \[
        M := \{h: h\text{ is a function } D'\to \{0,1,U\}\}\cup \{m_0\}.
    \]
    Function $\mu(f,j)$ is redefined to be the restriction of $f$ to the set $D'$.
    Clause ($\delta_\Diamond$) is replaced with
    \[
        \tag{$\delta'''_\Diamond$}
        g(\eta)=\begin{cases}
            0, \text{ if $f(\vartheta) \ne U$ and $|H'|<k$} \\ 
            1, \text{ if $f(\vartheta) \ne U$ and $|H'|\ge k$} \\ 
            U, \text{ if $f(\vartheta) = U$}
        \end{cases}
        \text{for }\eta=\langle (*,*)\rangle_{\ge k}\vartheta\in \Sigma,
    \]
    where
    \[
        H' = \{i \in [\Delta] : h_i(\vartheta)=1\}.
    \]
    Again, we interpret $m_0$ as a function with $m_0(\vartheta) = 0$ for all $\vartheta$.
\end{itemize}

It is now straightforward to check that in all the cases $\aA_\Delta$ computes the truth
value of $\psi_\Delta$ correctly in $\md(\psi_\Delta)+1$ steps, whence $\sA = (\aA_1, \aA_2, \dotsc)$
solves $\Pi_\Psi$ in constant time. Furthermore, it is easy to see that in case (c), 
$\aA_\Delta$ is in the class $\Multiset$, whence $\Pi_\psi$ is in $\MVl$. Similarly,
in case (e), $\aA_\Delta$ is in $\Broadcast$, and in case (f) $\aA_\Delta$ is in
$\Multiset\cap\Broadcast$, as desired.

\paragraph{Proof of Theorem~\ref{thm:logic}, Part 3.}

Assume now that $\Pi$ is a graph problem in $\VVl$. Thus, there is a 
sequence $\sA=(\aA_1,\aA_2,\ldots)$
of local algorithms in $\sVector$ such that for every $G\in\F(\Delta)$ and port numbering
$p$ of $G$, the output of $\aA_\Delta$ on $(G,p)$ is in $\Pi(G)$. We will encode
information on the states of computation and messages sent during the execution of $\aA_\Delta$ on an input
$(G,p)$ by suitable formulas of $\MML$. 

Using the definitions of Section~\ref{ssec:distalg}, let $\aA_\Delta=(Y, Z, z_0, M, m_0, \mu, \delta)$, and let $T$ be 
the running time of $\aA_\Delta$. 
We will build a formula
$\psi_\Delta\in \MML(I^{\Delta}_{+,+},\Phi_\Delta)$ simulating $\aA_\Delta$ from the following subformulas:
\begin{itemize}[noitemsep]
\item $\varphi_{z,t}$ for $z\in Y\cup Z$ and $t\in [T]$,
\item $\vartheta_ {m,j,t}$ for $m\in M$, $j\in[\Delta]$ and $t\in [T]$,
\item $\chi_{m,i,j,t}$ for $m\in M$, $i,j\in[\Delta]$ and $t\in [T]$.
\end{itemize}
The intended meaning of these subformulas are given in Table~\ref{tab:logic-mean}, and their recursive definitions are indicated in 
Table~\ref{tab:logic-def}.

\begin{table}
    \centering
    \begin{tabular*}{\columnwidth}{@{\extracolsep{\fill}}ll@{}}
        \toprule
        Subformulas of $\psi_\Delta$ & Algorithm $\aA_\Delta$
        \\
        \midrule
        $\varphi_{z,t}$ is true in world $v$
        & local state $x_t(v)$ is $z$
        \\
        $\vartheta_{m,j,t}$ is true in world $v$
        & node $v$ sends message $m$ to port $j$ in round $t$
        \\
        $\chi_{m,i,j,t}$ is true in world $v$
        & node $v$ receives message $m$ from port $i$ in round $t$,
        \\
        & the message was sent by an adjacent node to port $j$
        \\
        \bottomrule
    \end{tabular*}
    \caption{The intended meaning of the subformulas.}\label{tab:logic-mean}
\end{table}

\begin{table}
    \begin{tabular*}{\columnwidth}{@{\extracolsep{\fill}}l@{ }l@{\hspace{16pt}}l@{}}
        \toprule
        \multicolumn{2}{@{}l}{Recursive definition of the formulas}
        & Execution of $\aA_\Delta$
        \\
        \midrule
        \addlinespace
        $\varphi_{z,0}$: & Boolean combination of $q_i\in\Phi_\Delta$
        & initialisation: \\ && $x_0(u) = z_0(\deg(u))$
        \\ \addlinespace
        $\vartheta_{m,j,t+1}$: & Boolean combination of $\varphi_{z,t}$, $z\in Y\cup Z$
        & local computation: \\ && $m = \mu(x_t(v),j)$
        \\ \addlinespace
        \multicolumn{2}{@{}l}{$\chi_{m,i,j,t+1}: = \langle\alpha\rangle \vartheta_{m,j,t+1}$ with $\alpha=(i,j)$}
        & communication: \\ && construct $\vec{a}_{t+1}(v)$
        \\ \addlinespace
        $\varphi_{z,t+1}$: & Boolean combination of $\varphi_{x,t}$, $x\in Y\cup Z$,
        & local computation:
        \\
        & and $\chi_{m,i,j,t+1}$, $m\in M$, $i,j\in [\Delta]$
        & $x_{t+1}(v) = \delta(x_t(v), \vec{a}_{t+1}(v))$
        \\
        \addlinespace
        \bottomrule
    \end{tabular*}
    \caption{Constructing the formula $\psi_\Delta$, given an algorithm $\aA_\Delta$.}\label{tab:logic-def}
\end{table}

Note that the set $Z$ of intermediate states, as well as the set $M$ of messages,
may be infinite, whence there are potentially infinitely many formulas of the form
$\varphi_{z,t}$, $\vartheta_ {m,j,t}$ and $\chi_{m,i,j,t}$. However, it is easy
to prove by induction on $t$ that there are only finitely many \emph{different} formulas 
in the families
\begin{align*}
    \Psi_t &= \{\varphi_{z,t} : z\in Y\cup Z\}, \\
    \Theta_t &= \{\vartheta_ {m,j,t} : m\in M \text{ and } j\in[\Delta]\}, \\
    \Xi_t &= \{\chi_{m,i,j,t} : m\in M \text{ and } i,j\in[\Delta]\}.
\end{align*}
Indeed, for each $z\in Y\cup Z$, subformula $\varphi_{z,0}$ is a disjunction of the form
$\bigvee_{i\in J}q_i$ for some $J\subseteq [\Delta]$;
here $\bigvee_{i\in\emptyset}q_i$ is understood as some fixed contradictory formula. 
Furthermore, assuming $\Psi_t$ is finite,
there are only finitely many different Boolean combinations of formulas in 
$\Psi_t$, whence $\Theta_{t+1}$ is finite. By the same argument, if $\Theta_{t+1}$ is finite,
then so is $\Xi_{t+1}$, and if $\Psi_t$ and $\Xi_{t+1}$ are finite, then so is 
$\Psi_{t+1}$.

Clearly the formulas $\varphi_{z,t}$, $\vartheta_ {m,j,t}$ and $\chi_{m,i,j,t}$
can be defined in such a way that each of them has its intended meaning.
In particular, given an input $(G,p)$ to the algorithm $\aA_\Delta$, the output on a node $v$ is $1$
if and only if $v\in\lVert\varphi_{1,T}\rVert^{\KVV(G,p)}$. Thus, defining $\psi_\Delta:= \varphi_{1,T}$
we get $\lVert\psi_\Delta\rVert^{\KVV(G,p)}\in\Pi(G)$ for all $G\in\F(\Delta)$ and all port numberings
$p$ of $G$. Hence we conclude that the sequence $\Psi=(\psi_1,\psi_2,\ldots)$ defines a solution to $\Pi$.

As an additional remark, we note that the modal depth of each $\varphi_{z,t}$ is $t$, as an easy 
induction shows. In particular, $\md(\psi_\Delta)$ is equal to the running time $T$ of $\aA_\Delta$. 

\paragraph{Proof of Theorem~\ref{thm:logic}, Part 4.}

To complete the proof, we will now describe the changes needed in the technical details for proving 
the second halves of claims (c), (e) and (f). Thus, assume that $\Pi$ is a graph problem, and
$\sA=(\aA_1,\aA_2,\ldots)$ is a local algorithm which solves $\Pi$ and is in the class $\sMultiset$, $\sBroadcast$ 
or $\sMultiset\cap\sBroadcast$,
respectively. The corresponding modal formula $\psi_\Delta$ is constructed
from subformulas in the same way as in (3) with appropriate modifications in technical details. 

Since algorithms in $\Multiset$ cannot distinguish between the port numbers of incoming 
messages, the subscript $i$ in the formulas $\chi_{m,i,j,t}$ has to be removed in cases of (c) and (f).
On the other hand, the algorithms can count the multiplicities of incoming messages, whence
a new parameter $k\in [\Delta]$ for these formulas is needed. Furthermore, in cases (e) and (f),
the subscript $j$ has to be removed from the formulas $\vartheta_{m,j,t}$ and $\chi_{m,i,j,t}$,
as algorithms in the class $\Broadcast$ cannot send different messages through different ports.
Below, we summarise the modifications in each case separately.

\begin{itemize}
\item[(c)] The formulas $\chi_{m,i,j,t}$ are replaced with $\chi^k_{m,j,t}$, $k\in [\Delta]$. 
The recursive definition of these formulas is as follows:
\[
    \chi^k_{m,j,t+1}:=\langle (*,j)\rangle_{\ge k}\vartheta_{m,j,t+1}.
\]
The formulas $\vartheta_{m,j,t+1}$ and $\varphi_{z,t+1}$ are defined 
as in Table~\ref{tab:logic-def}, with $\chi^k_{m,j,t}$ in place of $\chi_{m,i,j,t}$.
\item[(e)] The formulas $\vartheta_{m,j,t}$ and $\chi_{m,i,j,t}$ are replaced with 
$\vartheta_{m,t}$, and $\chi_{m,i,t}$, respectively. 
The recursive definition of the latter is as follows:
\[
    \chi_{m,i,t+1}:=\langle (i,*)\rangle\vartheta_{m,t+1}.
\]
The formulas $\vartheta_{m,t+1}$ are defined as Boolean combinations of 
$\varphi_{z,t}$, and the formulas $\varphi_{z,t+1}$ are defined as
in Table~\ref{tab:logic-def}.
\item[(f)] The formulas $\vartheta_{m,j,t}$ and $\chi_{m,i,j,t}$ are replaced with 
$\vartheta_{m,t}$, and $\chi^k_{m,t}$, respectively. 
The recursive definition of the latter is as follows:
\[
    \chi^k_{m,t+1}:=\langle (*,*)\rangle_{\ge k}\vartheta_{m,t+1}.
\]
The formulas $\vartheta_{m,t+1}$ are defined as Boolean combinations of 
$\varphi_{z,t}$, and the formulas $\varphi_{z,t+1}$ are defined as
in Table~\ref{tab:logic-def}.
\end{itemize}

As in the proof of claim (b), it is easy to see that in each case the subformulas 
used in the construction of $\psi_\Delta:=\varphi_{1,T}$
can be defined in such a way that they have their intended meanings. 
Thus, for every graph $G\in\F(\Delta)$ and every port numbering $p$ of $G$, the output of $\aA_\Delta$ in $(G,p)$ equals
\[
	\lVert \psi_\Delta\rVert^{\KM_{a,b}(G,p)},
\]
where $a,b \in \{-,+\}$ is selected appropriately for each case. 
Hence, we conclude that the sequence $\Psi=(\psi_1,\psi_2,\ldots)$ defines 
a solution to $\Pi$.

This concludes the proof of Theorem~\ref{thm:logic}. \qed

\bigskip
\noindent
There is a slight asymmetry in the proof of Theorem~\ref{thm:logic}: 
in the first half of the proof the running time of the constructed algorithm $\aA_\Delta$
is $\md(\psi_\Delta)+1$, while in the second half the modal depth of the formula
$\psi_\Delta$ constructed is exactly the running time of the given algorithm $\aA_\Delta$.
However, this mismatch can be rectified by modifying the proof of the first part.
We did not write this modified proof simply to avoid unnecessary technicalities.

Theorem~\ref{thm:logic} gives us a tool for proving that a given graph
problem $\Pi$ is not in one of the classes considered in this paper. The idea
is to use bisimulation for showing that the corresponding modal logic cannot
define a solution for $\Pi$. At first it may appear that this tool can be applied
only for the constant-time versions of the classes, as the logical
characterisations in Theorem~\ref{thm:logic} are valid only in the constant-time
case. However, in the following corollary we show that the method based on 
bisimulation  can be used also in the general case. 
In principle, this result is valid for all seven classes, but we formulate it here only
for $\VV$, $\VB$ and $\SB$; these are the cases we use later in Section~\ref{sec:separ}. 
We remind the reader that throughout this section we focus on the case of binary outputs, i.e., $Y = \{0,1\}$, in which case we can interpret a solution $S \in \Pi$ as a subset $S \subseteq V$.

\begin{corollary}\label{cor:bisim}
Let $G=(V,E)\in\F(\Delta)$ be a graph, $X\subseteq V$, and let $\Pi$ be a 
graph problem such that  
for every $S\in\Pi(G)$, there are $u,v\in X$ with $u\in S$ and $v\notin S$.
\begin{enumerate}
    \item If there is a port numbering $p$ of $G$ such that all nodes in $X$ are bisimilar in the model $\KVV(G,p)$, then $\Pi$ is not in the class $\VV$.
    \item If there is a port numbering $p$ of $G$ such that all nodes in $X$ are bisimilar in the model $\KVB(G,p)$, then $\Pi$ is not in the class $\VB$.
    \item If there is a port numbering $p$ of $G$ such that all nodes in $X$ are bisimilar in the model $\KMB(G,p)$, then $\Pi$ is not in the class $\SB$.
\end{enumerate}
\end{corollary}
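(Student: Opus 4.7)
The plan is to derive each of the three statements by contradiction, reusing the formula-from-algorithm construction of the second half of the proof of Theorem~\ref{thm:logic} (Part~3, together with the appropriate case of Part~4), but applied to a single algorithm on the single fixed input $(G,p)$ rather than to a sequence. The key observation is that although the logical characterisations in Theorem~\ref{thm:logic} are stated for constant-time classes, the construction of the formulas $\varphi_{z,t}$, $\vartheta_{m,j,t}$, $\chi_{m,i,j,t}$ is purely local and works whenever the algorithm halts in some finite time~$T_0$; the resulting formula $\varphi_{1,T_0}$ then has modal depth $T_0$, and its extension in the corresponding Kripke model is exactly the set of nodes that output $1$. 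Once we have this formula, Fact~\ref{fact:bisim} forces bisimilar nodes to receive the same output, contradicting the hypothesis on~$\Pi$.

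Concretely, for claim~(1) I would assume towards a contradiction that $\Pi\in\VV$, witnessed by some $\sA=(\aA_1,\aA_2,\ldots)\in\sVector$. Taking the port numbering $p$ supplied by the hypothesis and letting $T_0$ be the finite round in which $\aA_\Delta$ halts on $(G,p)$, the construction of Part~3 yields a formula $\psi:=\varphi_{1,T_0}\in\MML(I^{\Delta}_{+,+},\Phi_\Delta)$ satisfying
\[
  \lVert\psi\rVert^{\KVV(G,p)} \,=\, \{\,v\in V : \aA_\Delta \text{ outputs } 1 \text{ on } v\,\} \,=:\, S.
\]
Since $\sA$ solves~$\Pi$, we have $S\in\Pi(G)$, so the hypothesis on $\Pi$ furnishes $u,v\in X$ with $u\in S$ and $v\notin S$, i.e., $\KVV(G,p),u\models\psi$ and $\KVV(G,p),v\nmodels\psi$. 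This contradicts Fact~\ref{fact:bisim}(a), because $u$ and $v$ are bisimilar in $\KVV(G,p)$.

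Claims~(2) and~(3) follow by the same scheme, plugging in the modifications of Part~4 of the proof of Theorem~\ref{thm:logic}. For~(2), an algorithm in $\sBroadcast$ witnessing $\Pi\in\VB$ feeds into case~(e) of Part~4 to yield a formula in $\MML(I^{\Delta}_{+,-},\Phi_\Delta)$ interpreted in $\KVB(G,p)$, after which Fact~\ref{fact:bisim}(a) gives the contradiction. For~(3), an algorithm in $\sSet\cap\sBroadcast$ witnessing $\Pi\in\SB$ feeds into the obvious $\Set\cap\Broadcast$ variant of case~(f)---i.e., case~(g) mentioned in the proof overview---to yield a formula in $\ML(\Phi_\Delta)$ interpreted in $\KMB(G,p)$, and again Fact~\ref{fact:bisim}(a) closes the argument. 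The only point that requires genuine care, and which I expect to be the main obstacle, is confirming that the recursive formula construction of Parts~3 and~4 is well-defined even when the state set $Z$ and message set $M$ are infinite: here the finiteness arguments given in Part~3 for the families $\Psi_t$, $\Theta_t$, $\Xi_t$ are purely syntactic (each step yields only finitely many Boolean combinations of finitely many subformulas), so they transfer verbatim.
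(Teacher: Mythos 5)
Your proposal is correct and follows essentially the same route as the paper: the paper's proof also fixes the halting time $T$ of $\aA_\Delta$ on the specific input $(G,p)$, obtains a formula whose extension in the corresponding Kripke model equals the output set (the paper packages this as ``truncate to a local algorithm and invoke Theorem~\ref{thm:logic}'', whereas you inline the formula construction of Parts~3--4 directly), and then concludes via Fact~\ref{fact:bisim}. The finiteness concern you flag is indeed already handled by the syntactic argument in Part~3, so there is no gap.
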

\begin{proof}
We prove only claim (a); the other claims can be proved in the same way. Let $\sA = (\aA_1, \aA_2, \dotsc)$ be any algorithm in $\sVector$, and let $\Delta$ be the maximum degree of $G$.

The key observation is that there is a \emph{local} algorithm $\aB_\Delta \in \Vector$ such that $\aB_\Delta$ and $\aA_\Delta$ produce the same output $S$ in $(G,p)$. We can obtain such a local algorithm $\aB_\Delta$ from algorithm $\aA_\Delta$ by adding a counter that stops the computation after $T$ steps, where $T$ is the running time of $\aA_\Delta$ on $(G,p)$.

As we have a local algorithm $\aB_\Delta$ that produces output $S$ in $(G,p)$, by Theorem~\ref{thm:logic}b there is also a formula $\psi\in\MML(I^\Delta_{+,+})$ such that
\[
	S=\lVert\psi \rVert^{\KVV(G,p)}.
\]
By assumption, all nodes in $X$ are bisimilar in the model $\KVV(G,p)$. By Fact \ref{fact:bisim}, there can be no $u,v\in X$ such that $u\in S$ and $v\notin S$. Hence we have $S \notin \Pi(G)$, and we conclude that $\sA$ cannot solve $\Pi$.
\end{proof}

\section{Relations between the Classes}

Now we are ready to prove relations \eqref{eq:order} and \eqref{eq:orderl} that we gave in Section~\ref{sec:contrib}.

\subsection{Equality \texorpdfstring{$\MV = \SV$}{MV = SV}}

Theorem~\ref{thm:multisetset} is the most important technical contribution of this work. Informally, it shows that \emph{outgoing} port numbers necessarily break symmetry even if we do not have \emph{incoming} port numbers---provided that we are not too short-sighted.

\begin{theorem}\label{thm:multisetset}
    Let $\Pi$ be a graph problem and let $T\colon \N \times \N \to \N$. Assume that there is an algorithm $\sA \in \sMultiset$ that solves $\Pi$ in time $T$. Then there is an algorithm $\sB\in \sSet$ that solves $\Pi$ in time $T + O(\Delta)$.
\end{theorem}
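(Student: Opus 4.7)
The plan is a two-phase algorithm: $O(\Delta)$ rounds of preprocessing, then $T$ rounds simulating $\sA$. The preprocessing computes an edge labelling $\ell$ of $G$ satisfying (i) both endpoints of every edge agree on its label, and (ii) at every vertex $u$ the labels of the edges incident to $u$ are pairwise distinct. Given such an $\ell$, one round of $\sA$ is simulated in one round of $\sB$ as follows: whenever $\sA$ would send a message $m = \mu(x, i)$ from port $i$ of $v$, algorithm $\sB$ sends the pair $(m, \ell(\{v, u_i\}))$, where $u_i$ is the neighbour at port $i$. At the receiver $u$, the set of incoming pairs has exactly $\deg(u)$ elements (distinct by (ii)), so $u$ recovers the multiset of underlying messages by projection and applies $\sA$'s transition function $\delta$ verbatim. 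The simulation reads only the \emph{set} of incoming (message, label) pairs, so $\sB\in\sSet$, and it runs for exactly $T$ rounds, giving total running time $T + O(\Delta)$.

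\textbf{Constructing $\ell$: the main obstacle.} The technical core is producing $\ell$ in $O(\Delta)$ rounds in $\sSet$ using only outgoing port numbers. The naive one-round attempt---letting $v$ send its own outgoing port number $i$ out of port $i$---fails, because two neighbours of $u$ may reach $u$ through the same outgoing port number, so their identical messages collapse in $u$'s set view. My plan is to spread the work over $\Delta$ paired rounds, processing one outgoing port per round. In round $r$, each node $v$ sends a distinguishing marker only through its $r$-th outgoing port, and a neutral symbol through all other ports; a paired echo round then lets each endpoint of the activated edge learn the other endpoint's outgoing port number toward it. After $O(\Delta)$ rounds, the two endpoints of every edge have jointly learned the ordered pair $(i, j)$ of outgoing port indices at its two ends, and setting $\ell(\{u,v\}) := (i, j)$ satisfies (i) by symmetric construction and (ii) because the $u$-side coordinate alone already distinguishes all edges incident to $u$.

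\textbf{The subtle point} I expect to wrestle with is that in any given paired round, a receiver $u$ may see the distinguishing marker arriving from several of its incident edges simultaneously---because several of $u$'s neighbours may happen to use their $r$-th outgoing port toward $u$---and in $\sSet$ all these identical markers collapse into one. I plan to handle this by a round-by-round invariant ensuring that at most one outgoing edge per node is being finalised at a time, so that the echo round cannot confuse pending handshakes. With this invariant established the construction of $\ell$ is sound, and the remainder---checking that the tagged simulation faithfully reproduces $\sA$ and that $\sB$ genuinely lies in $\sSet$---is a routine verification.
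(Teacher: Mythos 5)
Your high-level plan (an $O(\Delta)$ preprocessing phase that makes incoming messages distinguishable, followed by a tagged round-for-round simulation of $\sA$) matches the paper's strategy, and your simulation phase is fine. The gap is in the preprocessing, and it is exactly at the ``subtle point'' you flag. Your proposed fix---an invariant ensuring at most one outgoing edge per node is finalised at a time---controls the \emph{sender} side only, but the collision you need to avoid happens on the \emph{receiver} side and is dictated by the adversarial port numbering, not by your schedule. Concretely: let $u$ have two neighbours $w_1,w_2$ that are swapped by an automorphism fixing $u$, with both $w_1$ and $w_2$ using outgoing port $r$ towards $u$. In your round $r$ both activate towards $u$ simultaneously, their identical markers collapse in $u$'s set view, and no staggering of $w_1$'s or $w_2$'s own other ports changes this. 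Worse, the echo step is circular: when $u$ echoes back, $w_1$ receives $u$'s echo as one element of a \emph{set} of echoes from all its neighbours and cannot tell which echo came from the node at the far end of its port $r$---matching incoming messages to one's own outgoing ports is precisely the capability the $\Set$ (and $\Multiset$) model lacks, and it is what the handshake was supposed to establish in the first place. So the labelling $\ell$, and in particular the agreement condition (i), is never obtained.

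The paper avoids this entirely by never asking the sender to know anything about the receiver's side. Each node $u$ repeatedly sends, through each port $i$, the triple $(\beta_t(u),\deg(u),i)$ where $\beta_t(u)$ is $u$'s accumulated history of received message \emph{sets}; the tag is purely sender-local. The work then goes into a combinatorial lemma (Lemmas~\ref{lem:multisetset1} and~\ref{lem:multisetset2}): if two neighbours of $v$ still send $v$ identical triples in round $t$, then two rounds earlier they were indistinguishable ``of one higher order'', and since the order is bounded by $\deg(v)\le\Delta$, after $2\Delta$ rounds every node receives pairwise distinct triples from its distinct neighbours. That distinctness alone lets the receiver recover multiplicities, which is all that simulating $\Multiset$ requires. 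If you want to repair your write-up, you should replace the handshake by an argument of this refinement-plus-pigeonhole type; as it stands, the construction of $\ell$ does not go through.
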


To prove Theorem~\ref{thm:multisetset}, we define the following local algorithm $\aC_\Delta \in \Set$. Each node $v$ constructs two sequences, $\beta_t(v)$ and $B_t(v)$ for $t = 0, 1, \dotsc, 2\Delta$. Before the first round, each node $v$ sets $\beta_0(v) = \emptyset$ and $B_0(v) = \emptyset$. Then in round $t = 1, 2, \dotsc, 2\Delta$, each node $v$ does the following:
\begin{enumerate}[noitemsep,label=(\arabic*)]
    \item Set $\beta_t(v) = (\beta_{t-1}(v), B_{t-1}(v))$.
    \item For each port $i$, send $(\beta_t(v), \deg(v), i)$ to port $i$.
    \item Let $B_t(v)$ be the set of all messages received by $v$.
\end{enumerate}

Let $G = (V,E) \in \F(\Delta)$, and let $p$ be a port numbering of graph $G$. We will analyse the execution of $\aC_\Delta$ on $(G,p)$. If $p((v,i)) = (u,j)$, we define that $\pi(v,u) = i$. That is, $\pi(v,u)$ is the outgoing port number in $v$ that is connected to $u$. Let
\[
    m_t(u,v) = \bigl(\beta_t(u),\, \deg(u),\, \pi(u,v)\bigr)
\]    
denote the message that node $u$ sends to node $v$ in round $t$; it follows that $m_t(u,v) \in B_t(v)$ for all $\{u,v\} \in E$.

We begin with a following technical lemma. To pinpoint the key notion, let us
call $u$ and $w$ a \emph{pair of indistinguishable neighbours} of~$v$ in round~$t$, if
they are distinct neighbours of~$v$ such that  
\[
    \beta_t(u) = \beta_t(w)\text{, }
    \deg(u) = \deg(w)\text{, and }
    \pi(u,v) = \pi(w,v). 
\]
This is the same as saying that the node $v$ receives the same message from $u$ and $w$ in round $t$.
Let us say that $u$ and $w$ are a \emph{pair of indistinguishable neighbours of order~$k$}
if further it holds that $v$ has $k$ distinct neighbours $v_1, v_2, \dotsc, v_k$  such that
   \[
\beta_t(u) = \beta_t(w) = \beta_t(v_i)\text{ for all }i = 1, 2, \dotsc, k.
   \]
Here, $u$ or $w$ may belong to the set~$\{v_1, v_2, \dotsc, v_k\}$.  Note, however, that we do not
require each pair $(v_i,v_j)$ to be a pair of indistinguishables.

\begin{lemma}\label{lem:multisetset1}
Suppose that $u$ and $w$ are a pair of indistinguishable neighbours of~$v$ of order $k$ in round~$t\ge 4$. 
Then $u$ and $w$ are a pair of indistinguishable neighbours of~$v$ of order~$k+1$ in round~$t-2$.
\end{lemma}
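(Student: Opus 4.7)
My approach is a case analysis on whether $u,w$ lie in the witness set $\{v_1,\ldots,v_k\}$, with the real work confined to one hard case. From the recursion $\beta_t(x)=(\beta_{t-1}(x),B_{t-1}(x))$ one immediately sees the monotonicity $\beta_t(x)=\beta_t(y)\Rightarrow\beta_s(x)=\beta_s(y)$ and $B_s(x)=B_s(y)$ for all $s\le t-1$. Consequently $v_1,\ldots,v_k$ already yield $k$ neighbours of $v$ with $\beta_{t-2}(v_i)=\beta_{t-2}(u)$, so it suffices to produce a single additional such neighbour. If at most one of $u,w$ belongs to $\{v_1,\ldots,v_k\}$, the missing one itself plays the role of the extra neighbour; the only substantial case is therefore $u=v_1$ and $w=v_2$.

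In that hard case, I would build a two-level ``phantom'' chain. For each $i\in[k]$ the message $m_i=(\beta_{t-1}(v),\deg(v),\pi(v,v_i))$ that $v$ sends to $v_i$ at round $t-1$ lies in $B_{t-1}(v_i)$; but $\beta_t(u)=\beta_t(v_i)$ forces $B_{t-1}(v_i)=B_{t-1}(u)$, so $m_i$ also appears in $B_{t-1}(u)$ and must be supplied by some neighbour $y_i$ of $u$ with $\beta_{t-1}(y_i)=\beta_{t-1}(v)$, $\deg(y_i)=\deg(v)$, and $\pi(y_i,u)=\pi(v,v_i)$. Because the ports $\pi(v,v_i)$ on $v$'s side are pairwise distinct, the $y_i$ are $k$ pairwise distinct neighbours of $u$, with $y_1=v$. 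A second step then exploits the equality $B_{t-2}(y_i)=B_{t-2}(v)$ (which follows from $\beta_{t-1}(y_i)=\beta_{t-1}(v)$): the entry $(\beta_{t-2}(u),\deg(u),\pi(u,y_i))$ that $u$ contributes to $B_{t-2}(y_i)$ must therefore also sit in $B_{t-2}(v)$, so for every $i$ there is a neighbour $v'_i$ of $v$ with $\beta_{t-2}(v'_i)=\beta_{t-2}(u)$ and $\pi(v'_i,v)=\pi(u,y_i)$.

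The finale is a pigeon-hole count. Set $R=\{\pi(u,y_i):i\in[k]\}$ and $P=\{\pi(v_i,v):i\in[k]\}$. The map $\pi(u,\cdot)$ is a bijection between $u$'s neighbours and $u$'s own ports, so the $k$ distinct $y_i$ give $|R|=k$. On the other hand, the indistinguishability of $u,w$ forces $\pi(v_1,v)=\pi(u,v)=\pi(w,v)=\pi(v_2,v)$, so the defining list of $P$ has a repetition and $|P|\le k-1$. Hence some $\pi(u,y_{i_0})\in R\setminus P$, and the corresponding $v'_{i_0}$ has a port-to-$v$ value not shared by any $v_j$, so $v'_{i_0}\notin\{v_1,\ldots,v_k\}$ and is the desired $(k{+}1)$-th witness. (It remains to check that $u,w$ are still a pair of indistinguishables at round $t-2$, but this is immediate from monotonicity together with the fact that $\deg$ and $\pi(\cdot,v)$ do not depend on the round.)

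\textbf{Main obstacle.} I expect the delicate part to be the middle paragraph, where three distinct ``port'' quantities appear simultaneously ($\pi(v,v_i)$ on $v$'s side, $\pi(y_i,u)$ on $y_i$'s side, $\pi(u,y_i)$ on $u$'s side) and where the cascade $B_{t-1}(u)=B_{t-1}(v_i)\Rightarrow y_i\Rightarrow B_{t-2}(y_i)=B_{t-2}(v)\Rightarrow v'_i$ has to be conducted without conflating messages. The final counting step then depends on the subtle asymmetry that outgoing-port labels are injective per node, whereas port labels ``from a given neighbour to me'' can coincide across neighbours---which is precisely what drives $|P|<|R|$. The hypothesis $t\ge 4$ is what makes both cascading set-equalities $B_{t-1}(u)=B_{t-1}(v_i)$ and $B_{t-2}(y_i)=B_{t-2}(v)$ carry enough neighbourhood-level information for the derived $v'_{i_0}$ to really be a new neighbour of $v$.
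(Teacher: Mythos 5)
Your proof is correct and follows essentially the same route as the paper's: the two-round cascade (from $v$'s witnesses $v_i$ through $B_{t-1}(v_i)=B_{t-1}(u)$ to the distinct neighbours $y_i$ of $u$, and then through $B_{t-2}(y_i)=B_{t-2}(v)$ back to distinct neighbours $v'_i$ of $v$ with $\beta_{t-2}(v'_i)=\beta_{t-2}(u)$) is exactly the paper's construction, with $u$ playing the role the paper assigns to $v_1$. The only difference is the endgame: the paper obtains the extra witness by counting the duplicated message $m_{t-2}(u,v)=m_{t-2}(w,v)$ among the messages $v$ receives in round $t-2$, whereas you locate it explicitly via the cardinality comparison $|R|=k>|P|$; both hinge on the same coincidence $\pi(u,v)=\pi(w,v)$, and the paper's version happens to avoid your preliminary case split on whether $u,w$ lie in the witness set.
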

\begin{proof}
    From $\beta_t(u) = \beta_t(w)$ it follows that $\beta_{t-2}(u) = \beta_{t-2}(w)$. This implies $m_{t-2}(u,v) = m_{t-2}(w,v)$.

    For all $i=1,2,\dotsc,k$, node $v_i$ receives the message
    \[
        m_{t-1}(v,v_i) = \bigl(\beta_{t-1}(v),\, \deg(v),\, \pi(v,v_i)\bigr)
    \]
    from $v$ in round $t-1$. By assumption, we have $\beta_t(v_i) = \beta_t(v_j)$ for all $i$ and~$j$, which implies $B_{t-1}(v_i) = B_{t-1}(v_j)$. Now $m_{t-1}(v,v_i) \in B_{t-1}(v_i)$ implies $m_{t-1}(v,v_j) \in B_{t-1}(v_i)$ for all $i$ and~$j$.

    In any port numbering, we have $\pi(v,v_i) \ne \pi(v,v_j)$ for $i \ne j$. Therefore $m_{t-1}(v,v_i) \ne m_{t-1}(v,v_j)$, and $B_{t-1}(v_1)$ contains $k$ distinct messages. That is, node $v_1$ has $k$ distinct neighbours, $u_1, u_2, \dotsc, u_k$, such that
    \[
    \begin{split}
        \bigl(\beta_{t-1}(u_i),\, \deg(u_i),\, \pi(u_i,v_1)\bigr)
        &= m_{t-1}(u_i,v_1)
         = m_{t-1}(v,v_i) \\
        &= \bigl(\beta_{t-1}(v),\, \deg(v),\, \pi(v,v_i)\bigr).
    \end{split}
    \]
    In particular, $\beta_{t-1}(u_i) = \beta_{t-1}(v)$ for all $i$.
    
    Now let us investigate the messages that the nodes $u_i$ receive in round $t-2$. We have
    \[
        m_{t-2}(v_1,u_i) = \bigl(\beta_{t-2}(v_1),\, \deg(v_1),\, \pi(v_1,u_i)\bigr).
    \]
    However, $\beta_{t-1}(u_i) = \beta_{t-1}(v)$ implies $B_{t-2}(u_i) = B_{t-2}(v)$ for all $i$. In particular, $m_{t-2}(v_1,u_i) \in B_{t-2}(v)$ for all $i$. Now $\pi(v_1,u_i) \ne \pi(v_1,u_j)$ implies $m_{t-2}(v_1,u_i) \ne m_{t-2}(v_1,u_j)$ for all $i \ne j$.
    
    To summarise, $v$ receives the following messages in round $t-2$:
    $k$ distinct messages,
    \[
        m_{t-2}(v_1,u_i) = \bigl(\beta_{t-2}(v_1),\, \deg(v_1),\, \pi(v_1,u_i)\bigr) \text{ for } i = 1, 2, \dotsc, k,
    \]
    and two identical messages,
    \[
        m_{t-2}(u,v) = m_{t-2}(w,v) = \bigl(\beta_{t-2}(u),\, \deg(u),\, \pi(u,v)\bigr).
    \]
    Moreover, $\beta_{t-2}(v_1) = \beta_{t-2}(u)$. Hence $v$ receives at least $k+1$ messages in round $t-2$, each of the form $(\beta_{t-2}(u), \cdot, \cdot)$. Therefore $v$ has at least $k+1$ distinct neighbours $v'_i$ with $\beta_{t-2}(u) = \beta_{t-2}(v'_i)$.
\end{proof}

\begin{lemma}\label{lem:multisetset2}
If a node $v$ has two indistinguishable neighbours $u$ and $w$ of order~$k$ in round $2t$, 
then $v$ has at least $t+k-1$ neighbours.  Consequently, no node has a pair of indistinguishable
neighbours in round~$2\Delta$.
\end{lemma}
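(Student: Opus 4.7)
The plan is to prove the quantitative bound by induction on $t$, using Lemma~\ref{lem:multisetset1} as the inductive engine: each application of that lemma trades one step of round-index (from round $r$ to round $r-2$) for one extra neighbour of $v$ sharing the common $\beta$ value. Iterating this trade reduces the problem to a trivial early-round base case where the bound is readable directly off the definition of order.

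In more detail, I would take the base case to be $t = 1$, i.e., round $2t = 2$: if $u$ and $w$ form a pair of indistinguishable neighbours of $v$ of order $k$ in round $2$, then by the very definition of order $k$ the node $v$ already has $k$ distinct neighbours $v_1, \dotsc, v_k$, which immediately gives $\deg(v) \ge k = t + k - 1$. For the inductive step at $t \ge 2$, the hypothesis $2t \ge 4$ puts us in the scope of Lemma~\ref{lem:multisetset1}, which promotes the pair $(u, w)$ to a pair of indistinguishable neighbours of $v$ of order $k + 1$ in round $2(t-1)$. Applying the inductive hypothesis to $t - 1$ and $k + 1$ then yields $\deg(v) \ge (t-1) + (k+1) - 1 = t + k - 1$, closing the induction.

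For the \emph{consequently}-clause, I would argue by contradiction. Suppose some node $v$ has a pair of indistinguishable neighbours $u, w$ in round $2\Delta$. Because $u \ne w$ are both neighbours of $v$ and satisfy $\beta_{2\Delta}(u) = \beta_{2\Delta}(w)$, they themselves can serve as the two witnesses $v_1, v_2$ required by the definition, so this pair automatically has order at least $2$. Applying the first part of the lemma with $t = \Delta$ and $k = 2$ forces $\deg(v) \ge \Delta + 2 - 1 = \Delta + 1$, contradicting $G \in \F(\Delta)$.

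The only spot that deserves an explicit sanity check is the observation that every pair of indistinguishable neighbours carries order at least $2$, since this is exactly what makes $k$ large enough at round $2\Delta$ to push the degree estimate past $\Delta$. Beyond that I expect no real obstacle: all the genuine combinatorial work has already been done in Lemma~\ref{lem:multisetset1}, and the present lemma is a clean inductive corollary together with a one-line degree-counting argument.
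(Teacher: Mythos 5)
Your proof is correct and follows essentially the same route as the paper: induction on $t$ with base case $t=1$ read off the definition of order, the inductive step supplied by Lemma~\ref{lem:multisetset1}, and the final claim obtained by contradiction from the degree bound. The one detail you make explicit that the paper leaves implicit---that any pair of indistinguishable neighbours automatically has order at least $2$, with $u$ and $w$ themselves as witnesses---is exactly the right observation and is needed to push the bound to $\deg(v)\ge\Delta+1$.
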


\begin{proof}
The proof of the first claim is by induction on $t$. The base case $t=1$ is trivial. For the inductive step, apply Lemma~\ref{lem:multisetset1}.
For the consequence, we observe that if $u$ and $w$ were a pair of indistinguishable neighbours of~$v$,
then the first claim would imply that $\deg(v)\ge \Delta+1$, which is a contradiction,
as the maximum degree of~$G$ is at most $\Delta$.
\end{proof}

To summarise: 
$m_{2\Delta}(u,v) \ne m_{2\Delta}(w,v)$ whenever $u$ and $w$ are two distinct neighbours of $v$ in~$G$. 
In particular,
\[
    \bigl(\beta_{2\Delta}(u),\,\deg(u),\,\pi(u,v)\bigr) \ne
    \bigl(\beta_{2\Delta}(w),\,\deg(w),\,\pi(w,v)\bigr).
\]

Once we have finished running $\aC_\Delta$, which takes $O(\Delta)$ time, we can \emph{simulate} the execution of $\aA_\Delta \in \Multiset$ with an algorithm $\aB_\Delta \in \Set$ as follows: if a node~$u$ in the execution of $\aA_\Delta$ sends the message $a$ to port $i$, algorithm $\aB_\Delta$ sends the message
\[
    \bigl(\beta_{2\Delta}(u),\deg(u),\,i,\,a\bigr)
\]
to port $i$. Now all messages received by a node are distinct. Hence given the set of messages received by a node $v$ in $\aB_\Delta$, we can reconstruct the multiset of messages received by $v$ in $\aA_\Delta$. This concludes the proof of Theorem~\ref{thm:multisetset}.

\begin{corollary}\label{cor:multisetset}
    We have $\MV = \SV$ and $\MVl = \SVl$.
\end{corollary}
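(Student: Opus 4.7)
The plan is to establish each equality by proving the two containments separately. One direction is essentially immediate from the definitions in Section~1.5: any vector $\vec{a}\in M^\Delta$ satisfies $\multiset(\vec a)=\multiset(\vec b)\Rightarrow\set(\vec a)=\set(\vec b)$, so every state machine in $\Set$ is automatically in $\Multiset$. Consequently $\sSet\subseteq\sMultiset$, and hence any algorithm witnessing $\Pi\in\SV$ (resp.\ $\Pi\in\SVl$) also witnesses $\Pi\in\MV$ (resp.\ $\Pi\in\MVl$), with the same running time. This gives $\SV\subseteq\MV$ and $\SVl\subseteq\MVl$ without any further work.

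For the reverse inclusion $\MV\subseteq\SV$, I would just invoke Theorem~\ref{thm:multisetset} directly. Suppose $\Pi\in\MV$; by definition there exists $\sA\in\sMultiset$ and some $T\colon\N\times\N\to\N$ such that $\sA$ solves $\Pi$ in time $T$. Theorem~\ref{thm:multisetset} then produces $\sB\in\sSet$ solving $\Pi$ in time $T+O(\Delta)$, and the existence of such a $\sB$ is exactly what membership $\Pi\in\SV$ requires.

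For $\MVl\subseteq\SVl$ the same argument works verbatim, with one small check: if the original algorithm $\sA$ is local, meaning $T(\Delta,n)=T'(\Delta)$ depends only on $\Delta$, then the additive overhead $O(\Delta)$ supplied by Theorem~\ref{thm:multisetset} also depends only on $\Delta$, so the new running time $T+O(\Delta)$ is still independent of $n$. Hence $\sB$ is likewise local, yielding $\Pi\in\SVl$.

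There is no genuine obstacle here, since all the technical work has already been carried out in the proof of Theorem~\ref{thm:multisetset}. The only subtlety worth flagging explicitly is the preservation of constant time under the simulation, which is what allows the same argument to descend cleanly to the local versions of the classes.
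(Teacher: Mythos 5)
Your proposal is correct and matches the paper's proof, which simply states that the corollary is immediate from Theorem~\ref{thm:multisetset}: the containment $\SV\subseteq\MV$ (and $\SVl\subseteq\MVl$) follows from the inclusion $\Set\subseteq\Multiset$ already noted in the definitions, and the reverse containment is exactly the content of the theorem. Your explicit check that the additive $O(\Delta)$ overhead preserves locality is a sensible point to flag, and it is implicitly the reason the paper can also conclude $\MVl=\SVl$.
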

\begin{proof}
    Immediate from Theorem~\ref{thm:multisetset}.
\end{proof}

\begin{remark}
With minor modifications, the proof of Theorem~\ref{thm:multisetset} would also imply $\VV = \MV = \SV$. However, as we will see next, there is also a more direct way to prove $\VV = \MV$. The proof in the following section avoids the additive overhead in running time (but the overhead in message size may be much larger).
\end{remark}

\subsection{Equalities \texorpdfstring{$\VB = \MB$}{VB = MB} and \texorpdfstring{$\VV = \MV$}{VV = MV}}

The following theorem is implicit in prior work~\cite[Section 5]{astrand10vc-sc}; we give a bit more detailed version for the general case here. The basic idea is that algorithm $\sB$ augments each message with the full communication history, and orders the incoming messages lexicographically by the communication histories---the end result is equal to the execution of algorithm $\sA$ in the same graph $G$ for a very specific choice of incoming port numbers.

\begin{theorem}\label{thm:vectormultiset}
    Let $\Pi$ be a graph problem and let $T\colon \N \times \N \to \N$. Assume that there is an algorithm $\sA \in \sVector$ that solves $\Pi$ in time $T$. Then there is an algorithm $\sB \in \sMultiset$ that solves $\Pi$ in time $T$.
\end{theorem}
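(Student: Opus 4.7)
The plan is to construct $\sB$ by having each node simulate $\sA$ while augmenting every outgoing message with enough information that the receiver can canonically reconstruct a vector from the multiset of incoming messages. Concretely, in $\sB$ the state of a node $v$ at time $t$ stores the simulated $\sA$-state $x_t^A(v)$, and each outgoing message to port $j$ in round $t+1$ is a triple $\bigl(x_t^A(v),\, j,\, \mu_A(x_t^A(v), j)\bigr)$, where $\mu_A$ is the message-constructing function of $\sA$. At the receiver, the multiset of incoming messages is sorted lexicographically by the pair (sender state, outgoing port); the third component of the sorted triples yields a vector $\vec{a}$, and $\sB$'s next simulated state is $\delta_A(x_t^A(v), \vec{a})$.

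The key claim is that the execution of $\sB$ on $(G,p)$ coincides with the execution of $\sA$ on $(G, p')$, where $p'$ is the port numbering whose outgoing assignments agree with those of $p$, but whose incoming assignment at each node is induced by the canonical lexicographic sorting just described. Since $\sA$ solves $\Pi$ for \emph{every} port numbering, in particular for $p'$, the output of $\sB$ is in $\Pi(G)$. No additional rounds are needed beyond those of $\sA$, so the running time is preserved.

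The main obstacle is verifying that $p'$ is a single, well-defined port numbering for the \emph{whole} execution, not a fresh one per round. For this I would prove two points. First, by induction on $t$, simulated $\sA$-states evolve deterministically from their predecessors, so the lex order on states at round $t+1$ refines (is consistent with) the lex order at round $t$: if two neighbours $u,u'$ of $v$ have strictly ordered states $x_t^A(u) < x_t^A(u')$, then since $x_{t+1}^A$ is determined by $x_t^A$ together with the vector of incoming messages, one can arrange the lexicographic encoding of the state so that this strict order persists at $t+1$. Second, when two neighbours $u,u'$ of $v$ yield equal triples at some round, they must continue to yield equal triples thereafter; any fixed tie-breaking rule (e.g., a canonical total order on the finite alphabet used by $\sA$) then consistently singles out the same permutation of such a tied group across all rounds. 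This guarantees that the input-port assignment at $v$ determined by the lex-sorting is the same bijection in every round, so $p'$ is a legitimate port numbering.

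Once these two monotonicity/consistency properties are established, the simulation claim follows by a straightforward parallel induction on $t$ over all nodes: at every step, the vector $\vec{a}$ that $\sB$ reconstructs at $v$ is exactly $\vec{a}_{t+1}(v)$ as defined by the port numbering $p'$, so $\sB$'s simulated $\sA$-state equals $\sA$'s actual state under $(G, p')$. The algorithm $\sB$ is clearly in $\sMultiset$ because its transition function depends on the incoming messages only through their sorted form. Applied to the halting round of $\sA$, this yields a valid output in the same number of rounds, completing the proof and, together with Corollary~\ref{cor:multisetset}, establishing $\VV = \MV = \SV$ as claimed.
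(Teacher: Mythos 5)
Your overall strategy is the same as the paper's: simulate $\sA$ on a modified port numbering $p'$ that keeps the outgoing ports of $p$ but reassigns incoming ports according to a canonical sort of the received messages. You also correctly identify the crux, namely that $p'$ must be one fixed port numbering valid for the entire execution. However, the two properties you invoke to establish this are both false for your choice of sorting key. You sort by the \emph{current} simulated state of the sender (plus its outgoing port), and (i) the induced order on neighbours is not preserved from round to round: if $u$ and $u'$ are neighbours of $v$ with $x_t^A(u) < x_t^A(u')$ under a fixed total order on $Y \cup Z$, nothing prevents $x_{t+1}^A(u) > x_{t+1}^A(u')$, since the new states depend on the (different) messages $u$ and $u'$ receive from their own neighbourhoods; no fixed ``lexicographic encoding of the state'' can repair this, because the required order relations at different times can be contradictory. (ii) Ties do not persist: two neighbours sending identical triples $(x_t^A(\cdot), j, \cdot)$ to $v$ in one round can diverge in the next, again because their other neighbours feed them different messages, so a tie-breaking permutation fixed at one round need not be consistent with the strict order that appears later. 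Either failure means the effective incoming-port assignment at $v$ changes between rounds, the simulated run of $\sA$ corresponds to no actual $(G,p')$, and the correctness argument collapses.

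The repair is exactly the paper's device: sort not by the current state but by the \emph{full message history} $\beta_t(u,i) = (a_1(u,i),\dotsc,a_t(u,i))$, i.e., augment each outgoing message with everything previously sent to that port. Lexicographic order on growing prefixes is monotone under extension (once two histories differ, the first point of difference never changes), which gives your property (i); and histories that remain tied forever carry literally identical message sequences, so any assignment of incoming ports within a tied group yields the same vector $\vec{a}_t(u)$, which makes your property (ii) unnecessary. One still has to argue that a single port numbering compatible with the whole history exists; the paper does this by a nested family $\Pp_0 \supseteq \Pp_1 \supseteq \dotsb$ of compatible port numberings shown non-empty by induction. With that substitution your induction goes through and the running time is preserved, at the cost of unbounded message size --- the same cost the paper pays.
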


\begin{proof}
Let $\sA = (\aA_1, \aA_2, \dotsc) \in \sVector$ be an algorithm, and let $G \in \F(\Delta)$ be a graph of maximum degree at most $\Delta$. Consider a port numbering $p$ of $G$, and the execution of $\aA_\Delta$ on $(G,p)$. For each port $(u,i) \in P(G)$, let
\[
    \beta_t(u,i) = (a_1(u,i), a_2(u,i), \dotsc, a_t(u,i))
\]
be the full history of messages that node $u$ received from port $i$ in rounds $1, 2, \dotsc, t$. Let $<$ be the lexicographical order of such vectors, that is, $\beta_t(u,i) < \beta_t(v,j)$ if there is a time $\ell$ such that 
$a_\ell(u,i) <_M a_\ell(v,j)$ and $a_k(u,i) = a_k(v,j)$ for $k = 1, 2, \dotsc, \ell - 1$. Here $<_M$ is a fixed order
of the message set $M$ of $\aA_\Delta$.

We say that $p$ is \emph{compatible with the message history up to time $t$} if $\beta_t(u,i) \le \beta_t(u,j)$ for all nodes $u \in V$ and all $i < j$. Clearly, if $p$ is compatible with the message history up to time $t$, it is also compatible with the message history up to time $t-1$.

Now fix any port numbering $p_0$ of $G$. Let $\Pp_0$ be the family of all port numberings $p$ of $G$ such that for each port $(u,i) \in P(G)$ there are $v$, $j$, and $k$ such that $p((u,i)) = (v,j)$ and $p_0((u,i)) = (v,k)$. Put otherwise, any $p \in \Pp_0$ is equivalent to $p_0$ from the perspective of a $\Multiset$ algorithm. We make the following observations:
\begin{enumerate}[itemsep=0ex]
    \item $\Pp_0$ is non-empty, and each $p \in \Pp_0$ is compatible with the message history up to time $0$.
    \item State vector $x_0$ at time $0$ does not depend on the choice of $p \in \Pp_0$.
    \item The message sent by a node $v$ to port $j$ in round $1$ does not depend on the choice of $p \in \Pp_0$.
    \item There is at least one $p \in \Pp_0$ that is compatible with the message history up to time $1$.
\end{enumerate}
Now let $\Pp_t \subseteq \Pp_{t-1}$ consist of \emph{all} port numberings $p \in \Pp_{t-1}$ that are compatible with the message history up to time $t$. By induction, we have:
\begin{enumerate}[itemsep=0ex]
    \item $\Pp_t$ is non-empty, and each $p \in \Pp_t$ is compatible with the message history up to time $t$.
    \item The vector $\vec{a}_t(u)$ of messages received by $u$ in round $t$ does not depend on the choice of $p \in \Pp_t$. State vector $x_t$ at time $t$ does not depend on the choice of $p \in \Pp_t$.
    \item The message sent by a node $v$ to port $j$ in round $t+1$ does not depend on the choice of $p \in \Pp_t$.
    \item There is at least one $p \in \Pp_t$ that is compatible with the message history up to time $t+1$.
\end{enumerate}

Let $T = T(\Delta, |V|)$. In particular, $\aA_\Delta$ stops in time $T$ in $(G,p)$ for any $p \in \Pp_T$. Intuitively, a port numbering $p \in \Pp_T$ is constructed as follows: we have copied the \emph{outgoing} port numbers from a given port numbering $p_0$, and we have adjusted the \emph{incoming} port numbers so that $p$ becomes compatible with the message history up to time $T$. This choice of incoming port numbers is particularly convenient from the perspective of the $\Multiset$ model:
$\beta_t(u,i) < \beta_t(u,j)$ implies $i < j$, and
$\beta_t(u,i) = \beta_t(u,j)$ implies $a_t(u,i) = a_t(u,j)$.
That is, if we know the multiset
\[
    \multiset( \beta_t(u,1), \beta_t(u,2), \dotsc, \beta_t(u,\Delta) ),
\]
we can reconstruct the vector~$\vec{a}_t(u)$.

We design an algorithm $\aB_\Delta \in \Multiset$ with the following property: the execution of $\aB_\Delta$ on $(G,p_0)$ simulates the execution of $\aA_\Delta$ on $(G,p)$, where $p \in \Pp_T$. Note that the output of $\aA_\Delta$ does not depend on the choice of $p \in \Pp_T$. As the output of $\aA_\Delta$ is in $\Pi(G)$ for any port numbering of $G$, it follows that the output of $\aB_\Delta$ is also in $\Pi(G)$.

The simulation works as follows. For each port $(v,j) \in P(G)$, algorithm $\aB_\Delta$ keeps track of all messages sent by node $v$ to port $j$ in $\aA_\Delta$. Each outgoing message is augmented with the full message history. Hence in each round $t$, a node $u$ can reconstruct the unique vector $\vec{a}_t(u)$ that matches the execution of $\aA_\Delta$ on $(G,p)$ for any $p \in \Pp_t$.
\end{proof}

\begin{theorem}\label{thm:vectormultisetb}
    Let $\Pi$ be a graph problem and let $T\colon \N \times \N \to \N$. Assume that there is an algorithm $\sA \in \sBroadcast$ that solves $\Pi$ in time $T$. Then there is an algorithm $\sB \in \sMultiset \cap \sBroadcast$ that solves $\Pi$ in time $T$.
\end{theorem}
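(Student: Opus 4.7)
The plan is to piggy-back on the construction in the proof of Theorem~\ref{thm:vectormultiset} and simply verify that, when the starting point is a $\sBroadcast$ algorithm, the resulting simulation remains in $\sBroadcast$. So I would begin by rerunning, with no essential changes, the whole machinery of message histories $\beta_t(u,i)$, lexicographic compatibility, and the nested families $\Pp_T \subseteq \dotsb \subseteq \Pp_1 \subseteq \Pp_0$ of port numberings obtained from a reference $p_0$ by permuting only the incoming port labels at each node. The four-step induction that produces a $p \in \Pp_T$ compatible with the message history up to time $T$ uses the state machine $\aA_\Delta$ only through its transition function $\delta$ and message function $\mu$; it never exploits any dependence of $\mu$ on the outgoing port index. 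Consequently the induction goes through verbatim for $\aA_\Delta \in \Broadcast$.

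Next I would redefine $\aB_\Delta$ in exactly the same way as in Theorem~\ref{thm:vectormultiset}: at each round $\aB_\Delta$ augments the message that $\aA_\Delta$ would broadcast with the complete prior message history, and at the receiving side it uses the multiset of these augmented messages together with the lexicographic ordering to reconstruct the vector $\vec a_t(u)$ consistent with the canonical $p \in \Pp_t$. The correctness proof is unchanged: the output of $\aB_\Delta$ on $(G,p_0)$ equals the output of $\aA_\Delta$ on $(G,p)$ for the chosen $p \in \Pp_T$, which lies in $\Pi(G)$.

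The only new thing to check — and the heart of the proof — is membership of $\aB_\Delta$ in $\Broadcast$. Since $\aA_\Delta \in \Broadcast$, for every intermediate state $x$ the outgoing message $\mu_\aA(x,i)$ does not depend on $i$. The augmentation attached by $\aB_\Delta$ is a function of the node's local computation history only (the sequence of prior broadcast messages together with received multisets), and therefore does not depend on the outgoing port either. Hence $\mu_\aB(x,i)=\mu_\aB(x,j)$ for all $i,j\in[\Delta]$, which is exactly the condition for $\aB_\Delta\in\Broadcast$. Combined with the fact that $\aB_\Delta\in\Multiset$ already established in Theorem~\ref{thm:vectormultiset}, we obtain $\sB\in\sMultiset\cap\sBroadcast$, and the running time $T$ is inherited since the simulation is round-by-round.

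I do not anticipate a real obstacle here; the potential pitfall is a notational one, namely making sure that the ``history'' payload used in the augmentation is indexed by the sending node rather than by the (sending node, outgoing port) pair, so that the broadcast property is manifest. Once the augmentation is phrased as a single object per node per round, everything else is an immediate corollary of Theorem~\ref{thm:vectormultiset}.
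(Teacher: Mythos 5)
Your proposal is correct and follows essentially the same route as the paper, whose proof of this theorem is just the one-line remark that it is ``similar to the proof of Theorem~\ref{thm:vectormultiset}'' with each node keeping track of the history of its broadcasts. You have correctly identified the one point that needs checking---that the augmenting history is a single per-node object rather than a per-port object, so that the simulating algorithm stays in $\Broadcast$---which is exactly the content of the paper's remark.
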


\begin{proof}
This is similar to the proof of Theorem~\ref{thm:vectormultiset}. In the simulation, each node keeps track of the history of broadcasts.
\end{proof}

\begin{corollary}\label{cor:vectormultiset}
    We have $\VB = \MB$, $\VBl = \MBl$, $\VV = \MV$, and $\VVl = \MVl$.
\end{corollary}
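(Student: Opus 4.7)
The plan is to observe that all four equalities reduce immediately to Theorems~\ref{thm:vectormultiset} and~\ref{thm:vectormultisetb}, together with the trivial containments given by the definitions in Section~\ref{sec:contrib}.

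First I would handle the easy inclusions. By the definitions of the algorithm classes, $\sMultiset \subseteq \sVector$ and $\sMultiset \cap \sBroadcast \subseteq \sBroadcast$, so $\MV \subseteq \VV$ and $\MB \subseteq \VB$ hold directly, and likewise for the constant-time variants $\MVl \subseteq \VVl$ and $\MBl \subseteq \VBl$.

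Next I would establish the reverse inclusions. If $\Pi \in \VV$, then by definition there is an algorithm $\sA \in \sVector$ solving $\Pi$ in some time $T$; Theorem~\ref{thm:vectormultiset} produces an $\sB \in \sMultiset$ solving $\Pi$ in the same time $T$, hence $\Pi \in \MV$. The same argument using Theorem~\ref{thm:vectormultisetb} gives $\VB \subseteq \MB$. Combining with the trivial direction yields $\VV = \MV$ and $\VB = \MB$.

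The only nontrivial point is the constant-time case, but here I would simply note that the simulations in Theorems~\ref{thm:vectormultiset} and~\ref{thm:vectormultisetb} preserve running time exactly: the simulating algorithm $\sB$ runs in time $T$ whenever $\sA$ does. Consequently, if $\sA$ is a local algorithm (i.e.\ $T(\Delta,n) = T'(\Delta)$), then so is $\sB$. Therefore $\VVl \subseteq \MVl$ and $\VBl \subseteq \MBl$, which together with the trivial directions give the equalities $\VVl = \MVl$ and $\VBl = \MBl$. No step here requires any real work beyond invoking the previous theorems, so there is no serious obstacle.
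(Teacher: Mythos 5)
Your proposal is correct and matches the paper's argument exactly: the paper also derives all four equalities directly from Theorems~\ref{thm:vectormultiset} and~\ref{thm:vectormultisetb}, with the constant-time cases following because the simulations preserve the running time $T$. You have simply spelled out the trivial containments and the locality-preservation step that the paper leaves implicit.
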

\begin{proof}
    Follows from Theorems \ref{thm:vectormultiset} and \ref{thm:vectormultisetb}.
\end{proof}

\begin{remark}
Boldi et al.~\cite{boldi96symmetry} and Yamashita and Kameda~\cite{yamashita99leader} already give simulation results that, in essence, imply $\VB = \MB$ and $\VV = \MV$ (albeit for a slightly different model of computation). However, in prior work, the simulation overhead is linear in the number of nodes; in particular, it does not imply $\VBl = \MBl$ or $\VVl = \MVl$.
\end{remark}

\subsection{Separating the Classes}\label{sec:separ}

Trivially, $\SB \subseteq \MB \subseteq \MV$ and $\SBl \subseteq \MBl \subseteq \MVl$. Together with Corollaries \ref{cor:multisetset} and \ref{cor:vectormultiset} these imply
\begin{gather*}
    \SB \subseteq \MB = \VB \subseteq \SV = \MV = \VV, \\
    \SBl \subseteq \MBl = \VBl \subseteq \SVl = \MVl = \VVl.
\end{gather*}
Now we proceed to show that the subset relations are proper. We only need to come up with a graph problem that separates a pair of classes---here the connections to modal logic and bisimulation are a particularly helpful tool. Many of the separation results are already known by prior work (in particular, see Yamashita and Kameda~\cite{yamashita99leader}), but we give the proofs here to demonstrate the use of bisimulation arguments.

For the case of $\VB \ne \SV$, the separation is easy: we can consider the problem of breaking symmetry in a star graph.

\begin{theorem}\label{thm:starleaf}
    There is a graph problem $\Pi$ such that $\Pi \in \SVl$ and $\Pi \notin \VB$.
\end{theorem}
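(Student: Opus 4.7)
}

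The plan is to exhibit a graph problem that requires breaking symmetry among the leaves of a star, and then use bisimulation to rule out $\VB$. Fix $\Delta \geq 2$ and define $\Pi$ as follows: on a star $K_{1,d}$ with $2 \leq d \leq \Delta$, the valid solutions $S\colon V\to\{0,1\}$ are exactly those such that at least one leaf receives label $0$ and at least one leaf receives label $1$; on every other graph $G$, $\Pi(G)$ is the set of \emph{all} functions $V\to\{0,1\}$. This makes the existence claim trivial on non-stars and forces nontrivial symmetry breaking on stars.

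First I would show $\Pi \in \SVl$ by giving a two-round local algorithm $\sA \in \sSet$: every node first sends a designated ``leaf'' message on port $1$ if it has degree $1$, and otherwise sends the message $0$ on its port $1$ and the message $1$ on every other output port (this is legal in $\SV$ since outgoing ports are numbered). In the second round, every node of degree $\geq 2$ halts with output $0$ (or any fixed value), and every leaf halts with $0$ or $1$ according to which message it received (defaulting to $0$ if the message is ``leaf''). On a star $K_{1,d}$ with $d \geq 2$ the center sends $0$ to exactly one leaf and $1$ to the remaining $d-1$ leaves, so the output lies in $\Pi(K_{1,d})$; on every other graph the output is in $\Pi(G)$ vacuously.

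For $\Pi \notin \VB$, I would invoke Corollary~\ref{cor:bisim}b. Take $G = K_{1,d}$ with $2 \leq d \leq \Delta$, let $c$ be the center and $\ell_1,\dotsc,\ell_d$ be the leaves, and fix any port numbering $p$ where the port $i$ of $c$ is paired with the unique port of $\ell_i$. Set $X = \{\ell_1,\dotsc,\ell_d\}$ and
\[
    Z = \{(c,c)\} \cup \{(\ell_i,\ell_j) : i,j \in [d]\}.
\]
The verification of the bisimulation conditions for $\KVB(G,p)$ is short: (B1) holds since $c$ satisfies $q_d$ and every leaf satisfies $q_1$; for (B2) and (B3), the only outgoing edges from $\ell_i$ in the model $\KVB$ are pairs $(\ell_i, c) \in R_{(1,*)}$, all of which are matched by $(\ell_j, c) \in R_{(1,*)}$ and the pair $(c,c) \in Z$; from $c$, the only edge in $R_{(i,*)}$ goes to $\ell_i$, which is matched by itself via $(\ell_i,\ell_i) \in Z$. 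Hence all nodes of $X$ are bisimilar. By the definition of $\Pi$, every solution $S \in \Pi(G)$ contains some leaf and excludes some leaf, so Corollary~\ref{cor:bisim}b yields $\Pi \notin \VB$.

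The one step that requires care is the direction of the accessibility relations in $\KVB$: unpacking $R_{(i,*)} = \bigcup_j R_{(i,j)}$ with $R_{(i,j)} = \{(u,v) : p((v,j)) = (u,i)\}$, one must check that from $c$ exactly one edge goes out in each relation $R_{(i,*)}$ and that from a leaf the only relevant relation is $R_{(1,*)}$. Once this is done, the bisimulation check is essentially automatic, and the rest of the argument is routine.
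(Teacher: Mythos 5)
Your proposal is correct and follows essentially the same route as the paper: a leaf-symmetry-breaking problem on stars, solved in $\SVl$ by exploiting outgoing port numbers, and excluded from $\VB$ by exhibiting a bisimulation on $\KVB(G,p)$ that relates all leaves and applying Corollary~\ref{cor:bisim}b. The only differences are cosmetic (the paper asks for exactly one distinguished leaf rather than a non-constant labelling of the leaves, and leaves the bisimulation check implicit where you spell it out).
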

\begin{proof}
An appropriate choice of $\Pi$ is the (artificial) problem of selecting a leaf node in a star graph. More formally, we have the set of outputs $Y = \{0,1\}$. We define $\Pi(G)$ as follows, depending on $G$:
\begin{enumerate}
    \item $G = (V,E)$ is a $k$-star for a $k > 1$. That is, $V = \{ c, v_1, v_2, \dotsc, v_k \}$ and $E = \{ \{ c, v_i \} : i = 1, 2, \dotsc, k \}$. Then we have $S \in \Pi(G)$ if $S\colon V \to Y$, $S(c) = 0$, and there is a $j$ such that $S(v_j) = 1$ and $S(v_i) = 0$ for all $i \ne j$.
    \item $G = (V,E)$ is not a star. Then we do not restrict the output, i.e., $S \in \Pi(G)$ for any function $S\colon V \to Y$.
\end{enumerate}

It is easy to design a local algorithm $\sA \in \sSet$ that solves $\Pi$: First, all nodes send message $i$ to port $i$ for each $i$; then a node outputs $1$ if it has degree $1$ and if it received the set of messages $\{1\}$. Thus, $\Pi$ is in $\SVl$.

We use Corollary~\ref{cor:bisim}b for proving that $\Pi$ is not in $\VB$. Let $G=(V,E)$ be a $k$-star, 
and let $X\subseteq V$ be the set of leaf nodes of $G$. Then $\Pi$ and $X$ satisfy the assumption
in Corollary~\ref{cor:bisim}. Furthermore, it is easy to see that given any port numbering $p$ of $G$,
all nodes in $X$ are bisimilar in the model $\KVB(G,p)$.
\end{proof}

\begin{corollary}\label{cor:starleaf}
    We have $\VB \ne \SV$ and $\VBl \ne \SVl$.
\end{corollary}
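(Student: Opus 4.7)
The proof plan is extremely short, since the corollary is a direct consequence of Theorem~\ref{thm:starleaf}. The key observation is that $\Pi$ witnesses strict non-containment from two directions simultaneously: it lies inside $\SVl$ (the smallest relevant class on the $\SV$ side), yet it lies outside $\VB$ (the largest relevant class on the $\VB$ side).

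First I would handle $\VB \ne \SV$. By Theorem~\ref{thm:starleaf}, there is a problem $\Pi$ with $\Pi \in \SVl$ and $\Pi \notin \VB$. Since $\SVl \subseteq \SV$ by definition, we obtain $\Pi \in \SV \setminus \VB$, so $\SV \not\subseteq \VB$ and in particular $\VB \ne \SV$.

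Next I would handle the constant-time version $\VBl \ne \SVl$. Here the point is that $\VBl \subseteq \VB$ (every local algorithm is an algorithm), so $\Pi \notin \VB$ implies $\Pi \notin \VBl$. Combined with $\Pi \in \SVl$, this gives $\Pi \in \SVl \setminus \VBl$, and hence $\VBl \ne \SVl$.

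There is no real obstacle here; all the work has already been done in Theorem~\ref{thm:starleaf}. The only thing to be careful about is citing the right inclusions ($\SVl \subseteq \SV$ on one side, $\VBl \subseteq \VB$ on the other), so that the single witness $\Pi$ simultaneously separates both the general and the constant-time versions of the two classes.
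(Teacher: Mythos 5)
Your proposal is correct and matches the paper's proof, which simply cites Theorem~\ref{thm:starleaf}; you have merely made explicit the trivial inclusions $\SVl \subseteq \SV$ and $\VBl \subseteq \VB$ needed for the single witness $\Pi$ to separate both the general and constant-time versions.
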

\begin{proof}
    Follows from Theorem~\ref{thm:starleaf}.
\end{proof}

To show that $\SB \ne \MB$, we can consider, for example, the problem of identifying nodes that have an odd number of neighbours with odd degrees.

\begin{theorem}\label{thm:sumdeg}
    There is a graph problem $\Pi$ such that $\Pi \in \MBl$ and $\Pi \notin \SB$.
\end{theorem}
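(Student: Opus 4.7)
Define $\Pi$ as the graph problem with output set $Y=\{0,1\}$ in which $S\in\Pi(G)$ precisely when, for each $v\in V$, $S(v)=1$ iff the number of neighbours of $v$ with odd degree is itself odd. The plan is to prove Theorem~\ref{thm:sumdeg} in two parts: membership $\Pi\in\MBl$ by a direct broadcast algorithm, and non-membership $\Pi\notin\SB$ by an application of Corollary~\ref{cor:bisim}c.

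For $\Pi\in\MBl$, each node broadcasts one bit stating whether its own degree is odd, and after one round outputs the parity of the number of $1$-bits in the incoming multiset. This one-round algorithm lies in $\Multiset\cap\Broadcast$, witnessing $\Pi\in\MBl$.

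For $\Pi\notin\SB$, I will build a graph $G$ as a disjoint union $G_A\sqcup G_B$ of two small components designed so that the degree-$3$ vertices in $G_A$ (which must output $0$) and those in $G_B$ (which must output $1$) are bisimilar in the plain edge-relation Kripke model $\KMB(G,p)$. Concretely, let $G_A$ consist of a $4$-cycle $u_1u_2u_3u_4$ of degree-$3$ vertices together with degree-$2$ vertices $v_1,\ldots,v_4$ forming a matching $v_1v_2$, $v_3v_4$ and pendant edges $u_iv_i$; every $u_i$ then has neighbour-degrees $\{3,3,2\}$, so $S(u_i)=0$. Let $G_B$ consist of two adjacent degree-$3$ vertices $u'_1,u'_2$ together with degree-$2$ vertices $v'_1,\ldots,v'_4$ matched as $v'_1v'_2$, $v'_3v'_4$ and attached by the edges $u'_1v'_1$, $u'_1v'_3$, $u'_2v'_2$, $u'_2v'_4$; every $u'_j$ then has neighbour-degrees $\{3,2,2\}$, so $S(u'_j)=1$. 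By direct inspection, every degree-$2$ vertex of $G$ has exactly one neighbour of each degree, and every degree-$3$ vertex has at least one of each. Hence the relation $Z=\{(x,y):\deg(x)=\deg(y)\}$ satisfies clauses (B1)--(B3) of bisimulation in $\KMB(G,p)$: the valuation respects degrees, and within each degree class the \emph{set} of neighbour degrees is identical, which is all that standard bisimulation can see. Taking $X$ to be the set of six degree-$3$ vertices, every valid solution $S\in\Pi(G)$ assigns $0$ to the $u_i$ and $1$ to the $u'_j$, so Corollary~\ref{cor:bisim}c yields $\Pi\notin\SB$.

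The only delicate step is calibrating the two components so that the \emph{existence pattern} of neighbour degrees (what bisimulation sees) is identical across $G_A$ and $G_B$ while the \emph{multiset} of neighbour degrees flips parity on the odd-degree count. A short handshake argument pins down admissible sizes, of which the $(4,4)$--$(2,4)$ split above is among the smallest; this is where the real content of the separation lies.
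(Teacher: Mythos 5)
Your proof is correct and follows essentially the same route as the paper: the same problem (odd number of odd-degree neighbours), the same one-round parity-broadcast algorithm for $\MBl$, and the same bisimulation argument via Corollary~\ref{cor:bisim}c on a disconnected witness graph whose degree-$3$ vertices are forced to output differently in the two components. Your explicit pair of components checks out (every vertex of $G_A\sqcup G_B$ has neighbour-degree set $\{2,3\}$, so degree-equality is a bisimulation on $\KMB(G,p)$, while the odd-degree-neighbour count is $2$ for the $u_i$ and $1$ for the $u'_j$), differing from the paper only in the particular graphs drawn in its figure.
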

\begin{proof}
We define $\Pi$ as follows. Let $G = (V,E)$ and $S\colon V \to \{0,1\}$. We have $S \in \Pi(G)$ if the following holds: $S(v) = 1$ iff $v$ is a node with an odd number of neighbours of an odd degree.

The problem is trivially in $\MBl$: first each node broadcasts the parity of its degree, and then a node outputs $1$ if it received an odd number of messages that indicate the odd parity.

To see that the problem is not in $\SB$, it is sufficient to argue that the white nodes in the following graphs are bisimilar, yet they are supposed to produce different outputs.
\begin{center}
    \includegraphics[page=\PSumDeg]{figs.pdf}
\end{center}
More precisely, we can partition the nodes in five equivalence classes (indicated with the shading and shapes in the above illustration), and the nodes in the same equivalence class are bisimilar in the Kripke model $\KMB(G,p)$; recall that the model is independent of the choice of the port numbering~$p$. 
Thus, we can apply Corollary~\ref{cor:bisim}c with $X$ consisting of the two white nodes.
\end{proof}

\begin{corollary}\label{cor:sumdeg}
    We have $\SB \ne \MB$ and $\SBl \ne \MBl$.
\end{corollary}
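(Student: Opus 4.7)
The plan is to derive both separations directly from Theorem~\ref{thm:sumdeg}, which provides a single graph problem $\Pi$ witnessing both inequalities. The key observation is that $\Pi$ sits in the ``good'' class $\MBl$ while missing the ``bad'' class $\SB$, and this single witness suffices because of the obvious inclusions among the classes.

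First I would spell out the trivial containments that connect the constant-time classes with their general counterparts: by definition, $\SBl \subseteq \SB$ and $\MBl \subseteq \MB$, since a local algorithm is in particular an algorithm. Applying Theorem~\ref{thm:sumdeg} to the chain $\MBl \subseteq \MB$ yields $\Pi \in \MB$; combined with $\Pi \notin \SB$, this gives $\SB \ne \MB$, and together with the trivial inclusion $\SB \subseteq \MB$ noted at the start of Section~\ref{sec:separ}, the inequality strengthens to $\SB \subsetneq \MB$.

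For the constant-time statement, I would argue contrapositively using the inclusion $\SBl \subseteq \SB$: since $\Pi \notin \SB$, we have $\Pi \notin \SBl$ either, while Theorem~\ref{thm:sumdeg} directly supplies $\Pi \in \MBl$. Hence $\SBl \ne \MBl$, and combined with $\SBl \subseteq \MBl$ this yields $\SBl \subsetneq \MBl$.

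There is essentially no obstacle here; the entire work has been absorbed by Theorem~\ref{thm:sumdeg}, whose construction uses the bisimulation tool from Corollary~\ref{cor:bisim} to certify $\Pi \notin \SB$. The only thing worth emphasising in the write-up is the role of the inclusion $\MBl \subseteq \MB$, which is what lets a \emph{constant-time} upper bound on $\Pi$ separate the \emph{general} classes $\SB$ and $\MB$, thereby yielding both halves of the corollary from one witness problem.
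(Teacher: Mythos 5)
Your proposal is correct and is exactly the argument the paper intends: the paper's own proof is the one-line ``Follows from Theorem~\ref{thm:sumdeg}'', and you have simply spelled out the routine bookkeeping with the trivial inclusions $\SBl\subseteq\SB$ and $\MBl\subseteq\MB$ that makes the single witness $\Pi\in\MBl\setminus\SB$ yield both separations.
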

\begin{proof}
    Follows from Theorem~\ref{thm:sumdeg}.
\end{proof}

Finally, to separate $\VV$ and $\VVc$, we make use of the fact that there are graphs
$G$ such that some inconsistent port numbering of $G$ is totally symmetric, while
all consistent port numberings of $G$ necessarily break symmetry between nodes.

We start by proving that any regular graph has a totally symmetric port numbering.
Recall that a graph $G$ is \emph{$k$-regular} if $\deg(v)=k$ for every node $v$ of $G$. Furthermore,
$G$ is \emph{regular} if it is $k$-regular for some $k\in\N$.
Recall also that a  \emph{$1$-factor} (or \emph{perfect matching}) of a
graph $G=(V,E)$ is a set $F\subseteq E$ of edges such that every node 
$v\in V$ has degree $1$ in the graph $(V,F)$.

\begin{lemma}\label{lem:regbisim}
If $G$ is a regular graph, then there is a port numbering $p$ of $G$ such that
all nodes of $G$ are bisimilar in the model $\KVV(G,p)$.  
\end{lemma}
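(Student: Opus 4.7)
The plan is to produce a port numbering $p$ for which $Z = V \times V$ is a bisimulation on $\KVV(G,p)$; this will imply that all pairs of nodes of $G$ are bisimilar. Regularity of $G$ with common degree $k$ makes clause (B1) automatic, since every node lies in $\tau(q_k)$ and in no other $\tau(q_i)$. With $Z = V \times V$, the requirement $(w,w')\in Z$ in clauses (B2) and (B3) is free, so both clauses reduce to a single all-or-nothing condition: for each $(i,j) \in [\Delta]\times[\Delta]$ the domain of $R_{(i,j)} = \{(u,v) : p((v,j)) = (u,i)\}$ is either $\emptyset$ or $V$. Since $p$ is a bijection, the port $(u,i)$ is the image under $p$ of a unique port $(w,j_u(i))$, so the condition demands that $j_u(i)$ depends only on $i$, not on $u$.

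I would fulfil this by constructing a $p$ in which every port labelled $i$ at one endpoint of an edge connects to a port also labelled $i$ at the other endpoint, so that $j_u(i) = i$ uniformly. To build such a $p$ I would introduce the bipartite graph $H$ with parts $V_{\mathrm{out}} = V \times \{\mathrm{out}\}$ and $V_{\mathrm{in}} = V \times \{\mathrm{in}\}$, with an edge $\{(u,\mathrm{out}),(v,\mathrm{in})\}$ for each ordered adjacent pair $(u,v) \in A(G)$. Since $G$ is $k$-regular, $H$ is $k$-regular bipartite, so by K\"onig's edge-colouring theorem it decomposes into $k$ perfect matchings $M_1,\dots,M_k$. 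Then I would define $p((u,i)) := (v,i)$, where $v$ is the unique neighbour of $u$ such that $\{(u,\mathrm{out}),(v,\mathrm{in})\} \in M_i$. Because each $M_i$ meets every vertex of $H$ exactly once, $p$ is a well-defined bijection on $P(G)$ with $A(p) = A(G)$.

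To finish I would verify that $Z = V \times V$ is a bisimulation on $\KVV(G,p)$. By construction $R_{(i,j)} = \emptyset$ whenever $i \ne j$, while $R_{(i,i)} = \{(u,v) : \{(v,\mathrm{out}),(u,\mathrm{in})\} \in M_i\}$ has both domain and range equal to $V$ since $M_i$ is a perfect matching of $H$; hence (B2) and (B3) hold. The main conceptual step is recognising that making the port-label function $j_u(i)$ uniform in $u$ translates precisely to a proper $k$-edge colouring of $H$, after which K\"onig's theorem closes the argument with essentially no further work. The resulting $p$ is typically not consistent, and this is essential: insisting on consistency along with the uniform structure would require a $1$-factorization of $G$, which does not exist for every $k$-regular graph.
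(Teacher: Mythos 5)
Your proposal is correct and follows essentially the same route as the paper: both pass to the bipartite double cover of $G$, decompose it into $k$ perfect matchings (you via K\"onig's edge-colouring theorem, the paper via the standard corollary of Hall's theorem---the same fact), define $p$ so that port $i$ always connects to port $i$, and verify that $Z = V\times V$ is a bisimulation because $R_{(i,j)}$ is empty for $i\ne j$ and total for $i=j$. Your closing remark about the necessary inconsistency of $p$ also matches the paper's subsequent Lemma~\ref{lem:reg-1-factor}.
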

\begin{proof}
Assume that $G=(V,E)$ is $k$-regular. Let $V_s=V\times \{s\}$ for $s\in \{1,2\}$,
and let $E^*=\{\{(u,1),(v,2)\}: \{u,v\}\in E\}$. Then $G^*=(V_1\cup V_2, E^*)$
is a $k$-regular bipartite graph; see Figure~\ref{fig:regbisim}. It is a well-known corollary of
Hall's marriage theorem~\cite[Section 2.1]{diestel10graph} that the edge set of
any such graph is the union of $k$ mutually disjoint $1$-factors. Thus, there are 
sets $E_1,\ldots, E_k\subseteq E^*$ such that $E_i\cap E_j=\emptyset$ whenever $i\ne j$,
and each $E_i$ is a one-to-one correspondence between the sets $V_1$ and $V_2$.

\begin{figure}
    \centering
    \includegraphics[page=\PRegularVv]{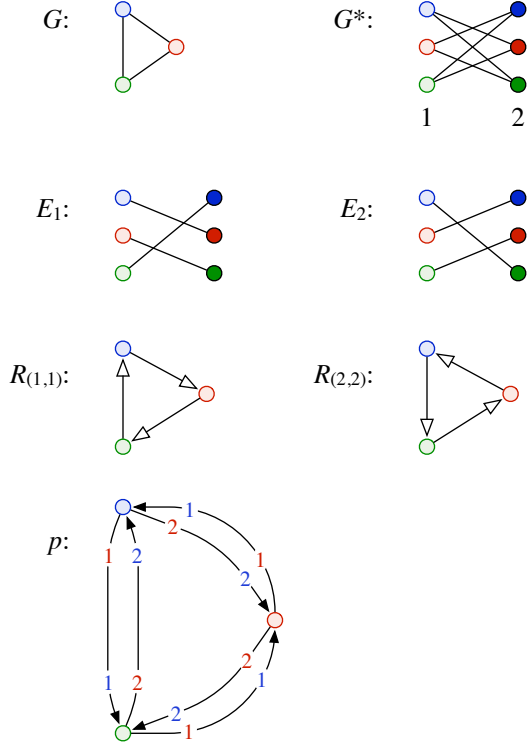}
    \caption{An illustration of the proof of Lemma~\ref{lem:regbisim}.}\label{fig:regbisim}
\end{figure}

Instead of defining a port numbering $p$, we use the sets $E_i$ to define a Kripke model
\[
    \KM=(V, (R_\alpha)_{\alpha\in I^k_{+,+}}, \tau).
\]
For each $i\in [k]$, we let 
$R_{(i,i)}=\{(u,v): \{(u,1),(v,2)\}\in E_i\}$, and if $i\ne j$, we let $R_{(i,j)}=\emptyset$.
Furthermore, we let $\tau$ to be as in the definition of  the models $\KM_{a,b}(G,p)$.
Clearly, there is a port numbering $p$ such that $\KM=\KVV(G,p)$.
Moreover, for every node $u\in V$ the set $\{v\in V: (u,v)\in R_{(i,j)}\}$ is nonempty
if and only if $i=j$. Using this, it is easy to see that the full relation $Z=V\times V$ 
is a bisimulation, whence all nodes are bisimilar in the model $\KVV(G,p)$.
\end{proof}

Note that the converse of Lemma~\ref{lem:regbisim} is true as well: if 
$u$ and $v$ are nodes in $G$ such that $\deg(u)\ne \deg(v)$, then $u$ and $v$
obviously cannot be bisimilar in the model $\KVV(G,p)$ for any port numbering $p$.

Our next aim is to show that there is a class of regular graphs $G$ such that all consistent port numberings of $G$ break symmetry---this is known from prior work~\cite{yamashita99leader} but we give a proof here for completeness.

\begin{lemma}\label{lem:reg-1-factor}
If $G$ is a $k$-regular graph for an odd $k$, and there is a consistent port numbering 
$p$ of $G$ such that all its nodes are bisimilar in the model $\KVV(G,p)$, then $G$
has a $1$-factor.  
\end{lemma}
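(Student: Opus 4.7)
The plan is to use the structure imposed by bisimilarity on the relations $R_{(i,j)}$ to extract a permutation of the port numbers $[k]$, and then use the parity of $k$ combined with consistency to force a fixed point, which will yield the $1$-factor.

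First, I would observe that for each node $v$ and each port $j \in [k]$, there is a unique $(u,i)$ with $p((v,j)) = (u,i)$. Hence $v$ has exactly one $R_{(i,j)}$-successor for this particular $i$, and none for other values of $i$. Since all nodes of $G$ are bisimilar in $\KVV(G,p)$, the forth condition (B2) of bisimulation forces the following: the index $i$ such that $v$ has an $R_{(i,j)}$-successor cannot depend on $v$. So for every $j \in [k]$ there is a well-defined $\sigma(j) \in [k]$ such that $p((v,j)) = (u,\sigma(j))$ for some neighbor $u$ of $v$, for every $v \in V$.

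Next, I would argue that $\sigma\colon [k] \to [k]$ is an involution. Since $p$ is a bijection on $P(G)$, the map $\sigma$ is a bijection of $[k]$ (otherwise ports would collide). Consistency of $p$ says $p(p((v,j))) = (v,j)$; applying $p$ to $(u,\sigma(j))$ yields a pair whose second coordinate, by the same argument, must equal $\sigma(\sigma(j))$. Since the result is $(v,j)$, we get $\sigma(\sigma(j)) = j$.

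Since $k$ is odd, the involution $\sigma$ has at least one fixed point $i^* \in [k]$, i.e., $\sigma(i^*) = i^*$. At this port index, for every $v \in V$ there exists a unique neighbor $u$ with $p((v,i^*)) = (u,i^*)$; consistency guarantees that then $p((u,i^*)) = (v,i^*)$, so the pairing is symmetric and fixed-point free (the graph is simple, so $u \neq v$). Defining
\[
    F = \bigl\{ \{u,v\} : p((v,i^*)) = (u,i^*) \bigr\},
\]
each node appears in exactly one edge of $F$, so $F$ is a $1$-factor of $G$.

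The main obstacle is the first step: verifying carefully that bisimilarity implies the index $\sigma(j)$ is independent of $v$. This needs a clean application of condition (B2), using the fact that the only $R_{(i,j)}$-successor of $v$, if one exists, is unique and determined by $p((v,j))$—so if $v$ and $v'$ were bisimilar but had their port $j$ mapped to different second coordinates, then for one choice of $i$, node $v$ would have an $R_{(i,j)}$-successor while $v'$ would have none, contradicting bisimilarity. Once this is established, the rest is a short combinatorial argument that leverages the parity of $k$.
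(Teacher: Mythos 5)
Your proof is correct and is essentially the paper's own argument: the paper packages your $\sigma$ as the relation $F=\{(i,j)\in[k]^2 : R_{(i,j)}\neq\emptyset\}$, shows it is a function using bisimilarity and an involution using consistency, and takes the fixed point guaranteed by odd $k$ to obtain the $1$-factor $R_{(i^*,i^*)}$. (One small note: with the paper's convention $R_{(i,j)}=\{(u,v): p((v,j))=(u,i)\}$, the roles of $i$ and $j$ in your successor condition are transposed, but since $p$ is an involution this is immaterial; also, the bijectivity of $\sigma$ is cleanest to deduce from the involution property itself rather than from injectivity of $p$.)
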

\begin{proof}
Let $p$ be a consistent port numbering of $G=(V,E)$ such that all nodes of $G$ are bisimilar
in $\KVV(G,p)$. Let  $F\subseteq [k]^2$ be the relation  $\{(i,j)\in [k]^2: R_{(i,j)}\ne \emptyset\}$.
Then $F$ is a function, since otherwise there are $u, v, u', v'\in V$ and such that
$(u,v)\in R_{(i,j)}$ and $(u',v')\in R_{(i,j')}$ for some $i,j,j'\in [k]$ with $j\ne j'$, 
which would imply that  $u$ and $u'$ are not bisimilar. Note further, that  by consistency of $p$,
relation $F$ is symmetric: if $(i,j)\in F$, then $(j,i)\in F$. Thus, $F$ is a permutation of $[k]$ such that
$F^{-1}=F$. Since $k$ is odd, there exists $i\in [k]$ such that $(i,i)\in F$. It is now easy
to see that the relation $R_{(i,i)}$ is a $1$-factor of $G$.
\end{proof}

By the previous two lemmas, each regular graph of odd degree and without $1$-factors
has an inconsistent symmetric port numbering, but no consistent symmetric port numberings.
In the proof of the separation result, we also need the assumption that all graphs we consider
are connected. Thus, we define
$\mathcal G$ to be the class of all connected regular graphs of odd degree which 
do not have a $1$-factor. It is easy to construct $k$-regular graphs in ${\mathcal G}$
for each odd degree $k\ge 3$. The graph illustrated in Figure~\ref{fig:vv-vvc}a is an
example with $k=3$. Figure~\ref{fig:vv-vvc}b shows an example of an inconsistent 
symmetric port numbering of the same graph.

\begin{figure}
    \centering
    \includegraphics[page=\PVvVvc]{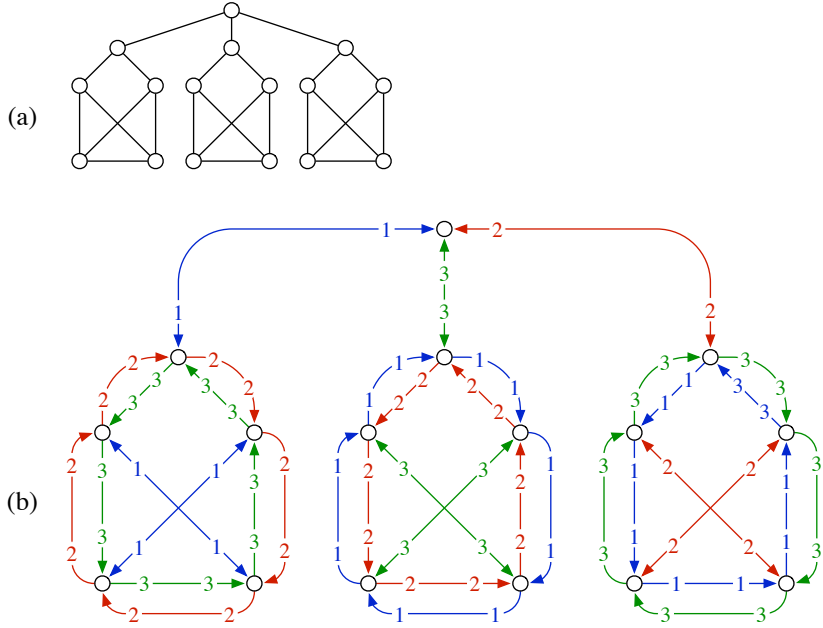}
    \caption{(a)~A 3-regular graph $G$ that does not have a 1-factor \cite[Figure~5.10]{bondy76graph-theory}. (b)~A symmetric port numbering of $G$.}\label{fig:vv-vvc}
\end{figure}

\begin{theorem}\label{thm:vv-vvc}
    There is a graph problem $\Pi$ such that $\Pi \in \VVcl$ and $\Pi \notin \VV$.
\end{theorem}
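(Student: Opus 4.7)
The plan is to exploit the asymmetry identified by Lemmas~\ref{lem:regbisim} and~\ref{lem:reg-1-factor}: every $G\in\mathcal{G}$ admits a bisimulation-symmetric (necessarily inconsistent) port numbering in the $\Vector$ model, yet no consistent port numbering can make all nodes bisimilar. To turn this into a graph problem separating $\VVcl$ from $\VV$, I would take $\Pi$ to be a weak $2$-coloring problem targeted at odd-degree nodes. With $Y=\{0,1\}$, define
\[
    \Pi(G)=\bigl\{\,S\colon V\to Y \ : \ \forall v\in V\text{ with }\deg(v)\text{ odd},\ \exists u\in N(v)\text{ with }S(u)\neq S(v)\,\bigr\}.
\]

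For the upper bound $\Pi\in\VVcl$, I would invoke the classical Naor--Stockmeyer style result for weak $2$-coloring of graphs whose vertices all have odd degree~\cite{naor95what,mayer95local}: this problem admits a constant-time distributed algorithm in $\VVc$ that relies only on the consistency of the port numbering and the parity of the degrees, without identifiers or knowledge of $|V|$. Vertices of even degree can safely output $0$, so the combined algorithm solves $\Pi$ on every input graph, placing $\Pi$ in $\VVcl$.

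For the lower bound $\Pi\notin\VV$, pick any $G\in\mathcal{G}$, for example the $3$-regular graph of Figure~\ref{fig:vv-vvc}a. Since $G$ is regular of odd degree, every node has odd degree, so every $S\in\Pi(G)$ is a weak $2$-coloring of $G$ and therefore non-constant: otherwise some vertex would have no neighbor of a different color. Hence, with $X=V$, the hypothesis of Corollary~\ref{cor:bisim}(a)---that every solution assigns both values $0$ and $1$ to members of $X$---is satisfied. By Lemma~\ref{lem:regbisim} there exists a port numbering $p$ of $G$ under which all nodes are bisimilar in $\KVV(G,p)$, and this $p$ is necessarily inconsistent by Lemma~\ref{lem:reg-1-factor}. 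Corollary~\ref{cor:bisim}(a) now immediately rules out $\Pi\in\VV$.

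The main obstacle is the upper-bound step: one has to check carefully that weak $2$-coloring of all-odd-degree graphs genuinely sits in $\VVcl$, not merely in $\VVc$, i.e.\ that the standard symmetry-breaking argument based on the parity of the degree and the structure of a consistent port numbering runs in a number of rounds that depends only on $\Delta$. The rest of the argument is essentially bookkeeping: Lemmas~\ref{lem:regbisim} and~\ref{lem:reg-1-factor} deliver the required symmetric port numbering on any $G\in\mathcal{G}$, and Corollary~\ref{cor:bisim}(a) then converts this bisimulation-level symmetry into the desired impossibility result for $\VV$.
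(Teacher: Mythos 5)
Your lower-bound argument is sound and is essentially the paper's own: take a connected $k$-regular graph $G$ of odd degree $k$, note that every admissible output is non-constant, invoke Lemma~\ref{lem:regbisim} to get a port numbering making all nodes bisimilar in $\KVV(G,p)$, and apply Corollary~\ref{cor:bisim}(a) with $X=V$. (For your choice of $\Pi$ you do not even need $G$ to lack a $1$-factor or to appeal to Lemma~\ref{lem:reg-1-factor}; any odd-degree regular graph works.) The gap is entirely in the upper bound, and you have flagged it yourself without closing it. First, the classical weak $2$-colouring result you invoke is for graphs in which \emph{every} vertex has odd degree, whereas your $\Pi$ imposes the constraint at every odd-degree vertex of an \emph{arbitrary} graph in $\F(\Delta)$; the recipe ``even-degree vertices output $0$'' does not reduce the mixed-degree case to the all-odd case (an odd-degree vertex whose neighbourhood mixes odd- and even-degree vertices is not handled by either half of your algorithm, and the odd-degree vertices do not have odd degree in the subgraph they induce). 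This is repairable by declaring $\Pi(G)$ unrestricted whenever $G$ has an even-degree vertex, but as written the problem is not shown to be solvable. Second, even in the all-odd case, membership in $\VVcl$ --- a consistent-port-numbering algorithm with no identifiers whose round count depends only on $\Delta$ --- is a substantive theorem that you defer to the literature and explicitly describe as ``the main obstacle'' still to be checked; a proof of Theorem~\ref{thm:vv-vvc} cannot rest on it without verifying it.

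It is worth contrasting this with how the paper avoids the difficulty: rather than demanding a weak $2$-colouring, it asks only for a \emph{non-constant} output, and only on the class $\mathcal G$ of connected regular odd-degree graphs with no $1$-factor (all other graphs are unconstrained). The upper bound then becomes a self-contained two-round algorithm: each node computes its local type $t(v)=(j_1,\dotsc,j_\Delta)$ from the consistent port numbering, exchanges types with its neighbours, and outputs $1$ iff its type is locally maximal. Lemma~\ref{lem:reg-1-factor} is exactly what guarantees that on a connected $G\in\mathcal G$ the types are not all equal, so the output is non-constant. If you weaken your problem in this way (or prove the port-numbering weak-colouring theorem from scratch, handling the mixed-degree case), your argument goes through; as it stands, $\Pi\in\VVcl$ is asserted rather than proved.
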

\begin{proof}
We define the graph problem $\Pi$ as follows: For all graphs $G=(V,E)\in{\mathcal G}$, 
$\Pi(G)$ consists of all non-constant functions $S\colon V \to \{0,1\}$, that is, we have $u,v \in V$ 
with $S(u) \ne S(v)$. For all graphs $G\notin{\mathcal G}$, $\Pi(G)$ consists of all functions 
$S\colon V \to \{0,1\}$.

Let us first prove that the problem is in $\VVcl$. Let $G=(V,E)\in\F(\Delta)$ be a graph
and $p$ a consistent port numbering of $G$. We define the local type of a node $v\in V$ to
be the tuple $t(v)=(j_1,j_2,\dotsc,j_\Delta)$, where $j_i$ is the number of the port of a 
neighbour $u$ to which the port $(v,i)$ is connected if $i\in [\deg(v)]$, and $j_i=0$
for $i>\deg(v)$. Fix some linear ordering $\le$ of the local types. Then there is a local
algorithm $\aA_\Delta\in\Vector$ such that its output on a node $v\in V$ is $1$ if
$t(u)\le t(v)$ for all neighbours $u$ of $v$, and $0$ otherwise: $\aA_\Delta$
computes first in one step the local types of the nodes, and then in a second step 
it sends the types to neighbouring nodes, compares the types, and outputs either 
$0$ or $1$ depending on the comparison.

The crucial observation now is that if $G\in {\mathcal G}$, then $G$ has nodes
with different local types. This is seen as follows.
If the local types of all nodes of $G$ are the same, then it is easy to see that 
all nodes are bisimilar in the model $\KVV(G,p)$. Thus, by Lemma~\ref{lem:reg-1-factor}, 
either $G$ is not $k$-regular for any odd $k$, or $G$ has a $1$-factor. 

Assume then that $G\in {\mathcal G}$ and $p$ is a consistent port numbering of $G$, 
and consider the output $S\colon V\to \{0,1\}$ that is produced by $\aA_\Delta$ in $(G,p)$. Since the local types
of all nodes are not the same and $G$ is connected, there are nodes $u,v\in V$
such that $t(u)<t(v)$, and $t(v)$ is maximal w.r.t.\ the ordering $\le$. This means
that $S(u)=0$ and $S(v)=1$, whence $S\in \Pi(G)$. We conclude that
the sequence $\sA=(\aA_1,\aA_2,\ldots)$ of algorithms solves $\Pi$ assuming consistency.

To see that $\Pi$ is not in $\VV$,
consider a graph $G=(V,E)\in {\mathcal G}$. By Lemma~\ref{lem:regbisim}, there exists a port numbering
$p$ of $G$ such that all nodes of $G$ are bisimilar in the model $\KVV(G,p)$ (as seen above, 
$p$ is inconsistent). The claim follows now from Corollary~\ref{cor:bisim}a, since
the graph problem $\Pi$ and the set $X=V$ clearly satisfy its assumption.
\end{proof}

\begin{corollary}\label{cor:vv-vvc}
    We have $\VVc \ne \VV$ and $\VVcl \ne \VVl$.
\end{corollary}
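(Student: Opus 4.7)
The plan is to derive both inequalities directly from Theorem~\ref{thm:vv-vvc}, which supplies a single graph problem $\Pi$ witnessing both separations simultaneously. The only thing to verify is that the witness $\Pi$ constructed there plays its role in both the general and the constant-time setting, which amounts to chasing the trivial inclusions $\VVcl \subseteq \VVc$ and $\VVl \subseteq \VV$.

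First I would observe that by Theorem~\ref{thm:vv-vvc} we have $\Pi \in \VVcl$ and $\Pi \notin \VV$. Since every local algorithm is in particular an algorithm (one just ignores the fact that its running time does not depend on $n$), we have $\VVcl \subseteq \VVc$, and hence $\Pi \in \VVc$. Combined with $\Pi \notin \VV$, this immediately yields $\VVc \ne \VV$, giving the first claim.

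For the constant-time separation, the key point is that $\VVl \subseteq \VV$ by the same trivial inclusion: any witness that a problem is \emph{not} solvable at all in the full $\VV$ model is \emph{a fortiori} a witness that it is not solvable by a local algorithm in $\sVector$. Thus $\Pi \notin \VV$ implies $\Pi \notin \VVl$, and together with $\Pi \in \VVcl$ we conclude $\VVcl \ne \VVl$, as required.

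I do not anticipate any obstacles; the content is entirely contained in the preceding theorem, and the corollary is just the packaging of its two conclusions in terms of class inequalities.
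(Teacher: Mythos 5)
Your proposal is correct and matches the paper's argument exactly: the paper's proof is just the one-line "Follows from Theorem~\ref{thm:vv-vvc}", and your chasing of the trivial inclusions $\VVcl \subseteq \VVc$ and $\VVl \subseteq \VV$ is precisely the intended (and correct) way to unpack it.
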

\begin{proof}
    Follows from Theorem~\ref{thm:vv-vvc}.
\end{proof}

\subsection{Conclusions}

In summary, we have established that the classes we have studied form a linear order of length four:
\begin{gather}
    \tag{\ref{eq:order}}
    \SB \subsetneq \MB = \VB \subsetneq \SV = \MV = \VV \subsetneq \VVc. \\
    \tag{\ref{eq:orderl}}
    \SBl \subsetneq \MBl = \VBl \subsetneq \SVl = \MVl = \VVl \subsetneq \VVcl.
\end{gather}
As a corollary of \eqref{eq:orderl} and Theorem~\ref{thm:logic}, we can make, for example, the following observations.
\begin{enumerate}
    \item $\MML$ captures the same class of problems on $\CVV$ and $\CMV$.
    \item Both $\MML$ and $\GMML$ capture the same class of problems on $\CMV$.
    \item The class of problems captured by $\MML$ becomes strictly smaller if we replace $\CMV$ with $\CVB$.
    \item $\MML$ on $\CVB$ captures the same class of problems as $\GML$ on $\CMB$.
\end{enumerate}

\paragraph{Open Questions Related to Equalities.}

In the proofs of Corollaries \ref{cor:multisetset} and \ref{cor:vectormultiset}, our main focus was on devising a simulation scheme in which the simulation overhead is only proportional to maximum degree $\Delta$ and running time $T$---this implies that local algorithms of a stronger model can be simulated with local algorithms in a weaker model. However, in our approach the simulation overhead is large in terms of message size. It is an open question if such a high overhead is necessary.

\paragraph{Open Questions Related to Separations.}

To keep the proofs of Theorems \ref{thm:starleaf}, \ref{thm:sumdeg}, and \ref{thm:vv-vvc} as simple as possible, we introduced graph problems that were highly contrived. An interesting challenge is to come up with \emph{natural} graph problems that could be used to prove the same separation results. It should be noted that prior work \cite{boldi96symmetry, yamashita99leader} presents some separation results that use a natural graph problem---leader election. However, leader election is a global problem; it cannot be solved in $\VVcl$, and hence we cannot use it to separate any of the constant-time versions of the classes.

Another challenge is to come up with \emph{decision problems} that separate the classes. Indeed, it is not known if the separation results hold if we restrict ourselves to decision problems.

\section*{Acknowledgements}

This work is an extended and revised version of a preliminary conference report~\cite{hella12weak-models}. We thank anonymous reviewers for their helpful feedback, and J\'er\'emie Chalopin, Mika G\"o\"os, and Joel Kaasinen for discussions and comments.

This work was supported in part by Academy of Finland (grants 129761, 132380, 132812, and 252018), the research funds of University of Helsinki, and Finnish Cultural Foundation. Part of this work was conducted while Tuomo Lempi\"ainen was with the Department of Information and Computer Science at Aalto University.

\def\UrlFont{\sf\footnotesize}
\bibliographystyle{plainnat}
\bibliography{paper}

\end{document}